\documentclass[review]{elsarticle}

\usepackage{lineno,hyperref}
\usepackage{latexsym}
\usepackage{mathtools}
\usepackage{enumerate}
\usepackage{extarrows}
\usepackage{xcolor}
\usepackage{cases}
\usepackage{array}
\usepackage{url}
\usepackage{amsmath}
\usepackage{mathrsfs}
\usepackage{amsthm}
\usepackage{amsfonts}
\usepackage{amssymb}
\usepackage{epstopdf}
\usepackage{graphicx}
\usepackage{subfigure}
\usepackage{algorithm}
\usepackage{algorithmic}
\usepackage{threeparttable}
\usepackage{multirow}
\usepackage{booktabs}

\def\bdiag{\mathop{\rm bdiag}}
\def\bcirc{\mathop{\rm bcirc}}
\def\supp{\mathop{\rm supp}}

\def\supp{\mathop{\rm supp}}
\def\rank{\mathop{\rm rank}}
\def\fold{\mathop{\rm fold}}
\def\unfold{\mathop{\rm unfold}}
\newtheorem{theorem}{Theorem}[section]

\newtheorem{definition}[theorem]{Definition} 

\newtheorem{lemma}[theorem]{Lemma} 

\newtheorem{remark}[theorem]{Remark} 

\newtheorem{proposition}[theorem]{Proposition}  

\newtheorem{corollary}[theorem]{Corollary}
\modulolinenumbers[5]

\journal{Journal of \LaTeX\ Templates}









\bibliographystyle{elsarticle-num}

\begin{document}

\begin{frontmatter}

\title{RIP-based performance guarantee for low-tubal-rank tensor recovery}


\author[mymainaddress]{Feng Zhang}
\ead{zhangf@email.swu.edu.cn}

\author[mymainaddress]{Wendong Wang}
\ead{d.sylan@foxmail.com}

\author[mymainaddress]{Jianwen Huang}
\ead{hjw1303987297@126.com}

\author[mythirdaddress]{Yao Wang}
\ead{yao.s.wang@gmail.com}

\author[mysecondaddress]{Jianjun Wang\corref{mycorrespondingauthor}}
\cortext[mycorrespondingauthor]{Corresponding author}
\ead{wjj@swu.edu.cn}

\address[mymainaddress]{School of Mathematics and Statistics, Southwest University, Chongqing, 400715, China}
\address[mysecondaddress]{College of Artificial Intelligence, Southwest University, Chongqing, 400715, China}
\address[mythirdaddress]{School of Mathematics and Statistics, Xi'an Jiaotong University, Xi'an, 710049, China}

\begin{abstract}
The essential task of tensor data analysis focuses on the tensor decomposition and the corresponding notion of rank. In this paper, by introducing the notion of tensor Singular Value Decomposition (t-SVD), we establish a Regularized Tensor Nuclear Norm Minimization (RTNNM) model for low-tubal-rank tensor recovery. As we know that many variants of the Restricted Isometry Property (RIP) have proven to be crucial analysis tools for sparse recovery. In the t-SVD framework, we initiatively define a novel tensor Restricted Isometry Property (t-RIP). Furthermore, we show that any third-order tensor $\boldsymbol{\mathcal{X}}$ can stably be recovered from few linear noise measurements under some certain t-RIP conditions via the RTNNM model. We note that, as far as the authors are
aware, such kind of result has not previously been reported in the literature.
\end{abstract}

\begin{keyword}
Low-rank tensor recovery\sep tensor Singular Value Decomposition\sep tensor Restricted Isometry Property\sep regularized
\MSC[2010]90C25\sep  65J22\sep  53A45\sep  15A69
\end{keyword}

\end{frontmatter}


\section{Introduction}
\label{Introduction}
Utilizing the tensor model, possessed of the ability to make full use of multi-linear structure, instead of the traditional matrix-based model to analyze multi-dimensional data (tensor data) has widely attracted attention. Indeed, in the real world, the carriers of information processing are more commonly tensor data containing more complex information and structures, such as video, hyperspectral images and communication networks stream data, etc. Low-Rank Tensor Recovery (LRTR) as a representative problem is not only a mathematical natural generalization of the Compressed Sensing (CS) and Low-Rank Matrix Recovery (LRMR) problem, but also there exists lots of reconstruction applications of data that have intrinsically many dimensions in the context of LRTR including signal processing \cite{Liu2013Tensor,Wang2019Generalized}, machine learning \cite{Romera2013Multilinear,wang2017compressive}, data mining \cite{M2011Applications,friedland2014compressive}, and many others \cite{Lu2019Tensor,rauhut2017low,Zhang2014Novel,Lu2018Exact,shi2013guarantees}.

The purpose of LRTR is to reconstruct a low-rank tensor $\boldsymbol{\mathcal{X}}\in\mathbb{R}^{n_{1} \times n_{2} \times n_{3}}$ (this article considers only the third-order tensor without loss of generality) from linear noise measurements $\boldsymbol{y}=\boldsymbol{\mathfrak{M}}(\boldsymbol{\mathcal{X}})+\boldsymbol{w}$, where $\boldsymbol{\mathfrak{M}}: \mathbb{R}^{n_{1} \times n_{2} \times n_{3}} \rightarrow \mathbb{R}^{m}$ $(m\ll n_{1}n_{2}n_{3})$ is a random map with i.i.d. Gaussian entries and $\boldsymbol{w}\in\mathbb{R}^{m}$ is a vector of measurement errors. To be specific, we consider addressing the following rank minimization problem
\begin{equation}\label{rank min}
    \min \limits_{{\boldsymbol{\mathcal{X}}}\in\mathbb{R}^{n_{1} \times n_{2} \times n_{3}}}~\rank(\boldsymbol{\mathcal{X}}),~~s.t.~~\|\boldsymbol{y}-\boldsymbol{\mathfrak{M}}(\boldsymbol{\mathcal{X}})\|_{2}\leq\epsilon,
\end{equation}
where $\epsilon$ is a positive constant. However, there exist two tough problems in dealing with model (\ref{rank min}). First, the rank minimization problem (\ref{rank min}) is NP-hard and non-convex. Second, how should the rank of a tensor be defined? Unlike the unique definition of matrix rank, there exist different notions of tensor rank induced by different tensor decompositions. For example, the CANDECOMP/PARAFAC (CP) rank is defined as the minimum number of rank-1 tensors required to achieve the CP decomposition \cite{kiers2000towards}; The Tucker rank induced by the Tucker decomposition \cite{Tucker1966Some} is defined as a vector whose $i$-th element is the rank of the mode-$i$ unfolding matrix of a tensor. However, computing the CP rank of a tensor is already NP-hard in general and its convex relaxation is also intractable. Although the Tucker rank is tractable, its convex surrogate, Sum of Nuclear Norms (SNN) as  \cite{Liu2013Tensor}, is not the tightest convex relaxation, which will lead to that the surrogate model can be substantially suboptimal \cite{mu2014square}.

Recently, based on the definition of tensor Singular Value Decomposition (t-SVD) \cite{Kilmer2011Factorization,Martin2013An} that enjoys many similar properties as the matrix case, the tensor tubal rank (see Definition \ref{Tensor tubal rank}) is proposed by Kilmer et al. \cite{kilmer2013third}. Along those lines, Lu et al. \cite{Lu2019Tensor} given a new and rigorous way to define the tensor average rank (see Definition \ref{Tensor average rank}) and the tensor nuclear norm (see Definition \ref{Tensor nuclear norm}), and proved that the convex envelop of tensor average rank is tensor nuclear norm within the unit ball of the tensor spectral norm. Furthermore, they pointed out that a tensor always has low average rank if it has low tubal rank. Therefore, a convex tensor nuclear norm minimization (TNNM) model based on the assumption of low tubal rank for tensor recovery has been proposed in \cite{Lu2019Tensor}, which solves
\begin{equation}\label{tensor nuclear norm min}
  \min \limits_{{\boldsymbol{\mathcal{X}}}\in\mathbb{R}^{n_{1} \times n_{2} \times n_{3}}}~\|\boldsymbol{\mathcal{X}}\|_{*},~~s.t.~~\|\boldsymbol{y}-\boldsymbol{\mathfrak{M}}(\boldsymbol{\mathcal{X}})\|_{2}\leq\epsilon,
\end{equation}
where tensor nuclear norm $\|\cdot\|_{*}$ is as the convex surrogate of tensor average rank.

In order to facilitate the design of algorithms and the needs of practical applications, instead of considering the constrained-TNNM (\ref{tensor nuclear norm min}), in this paper, we present a theoretical analysis for Regularized Tensor Nuclear Norm Minimization (RTNNM) model, which takes the form
\begin{equation}\label{tensor nuclear norm min unconstrained}
  \min \limits_{{\boldsymbol{\mathcal{X}}}\in\mathbb{R}^{n_{1} \times n_{2} \times n_{3}}}~\|\boldsymbol{\mathcal{X}}\|_{*}+\frac{1}{2\lambda}\|\boldsymbol{y}-\boldsymbol{\mathfrak{M}}(\boldsymbol{\mathcal{X}})\|_{2}^{2},
\end{equation}
where $\lambda$ is a positive parameter. According to \cite{zhang2016one}, there exists a $\lambda > 0$ such that the solution to the regularization problem (\ref{tensor nuclear norm min unconstrained}) also solves the constrained problem (\ref{tensor nuclear norm min}) for any $\epsilon > 0$, and vice versa. However, model (\ref{tensor nuclear norm min unconstrained}) is more commonly used than model (\ref{tensor nuclear norm min}) when the noise level is not given or cannot be accurately estimated. There are many examples of solving RTNNM problem (\ref{tensor nuclear norm min unconstrained}) based on the tensor nuclear norm heuristic. For instance, the low-tubal-rank tensor completion problem \cite{Lu2018Exact} can be thought of as partial observations under the identity transformation and solved as a special case of the RTNNM problem (\ref{tensor nuclear norm min unconstrained}) in an alternating direction method of multipliers (ADMM) \cite{boyd2011distributed} framework. While the application and algorithm research of (\ref{tensor nuclear norm min unconstrained}) is already well-developed, only a few contributions on the theoretical results with regard to performance guarantee for low-tubal-rank tensor recovery are available so far. The Restricted Isometry Property (RIP) introduced by Cand\`{e}s et al. \cite{candes2005Decoding} and Recht et al. \cite{recht2010guaranteed} is one of the most widely used frameworks in CS and LRMR. In this paper, we generalize the RIP tool to tensor case based on t-SVD and establish the guaranteed conditions for low-tubal-rank tensor recovery.

As we know, different tensor decompositions induce different notions of tensor rank, which will lead to different notions of the tensor RIP. For example, in 2013, Shi et al. \cite{shi2013guarantees} defined the first tensor RIP based on Tucker decomposition and showed that a tensor $\boldsymbol{\mathcal{X}}\in\mathbb{R}^{n_{1} \times n_{2} \times n_{3}}$ with Tucker rank-$(r_{1}, r_{2}, r_{3})$ can be exactly reconstructed in the noiseless case if the linear map $\boldsymbol{\mathfrak{M}}$ satisfies the tensor RIP with the constant $\delta_{\Lambda}<0.4931$ for $\Lambda\in\left\{(2r_{1}, n_{2}, n_{3}),(n_{1},2r_{2}, n_{3}),(n_{1}, n_{2}, 2r_{3})\right\}$; In 2017, based on three variants of Tucker decomposition, i.e., the Higher Order Singular Value Decomposition (HOSVD), the Tensor Train format (TT), and the general Hierarchical Tucker decomposition (HT), Rauhut et al. \cite{rauhut2017low} also induced three notions of the tensor RIP. These tensor RIP definitions are difficult to implement due to relying on a rank tuple that differs greatly from the known matrix rank definition, which would result in some existing analysis tools and techniques not being available for tensor cases. In contrast, the tensor tubal rank induced by the t-SVD is consistent with the matrix rank in terms of form and properties. So, we initiatively define a novel tensor Restricted Isometry Property as follows:
\begin{definition}\quad(t-RIP)\label{Tensor RIP Definition}
A linear map $\boldsymbol{\mathfrak{M}}: \mathbb{R}^{n_{1} \times n_{2} \times n_{3}}\rightarrow \mathbb{R}^{m}$, is said to satisfy the t-RIP of order $r$ with tensor Restricted Isometry Constant (t-RIC) $\delta_{r}^{\boldsymbol{\mathfrak{M}}}$ if $\delta_{r}^{\boldsymbol{\mathfrak{M}}}$ is the smallest value $\delta^{\boldsymbol{\mathfrak{M}}}\in(0,1)$ such that
\begin{equation}\label{t-RIP}
(1-\delta^{\boldsymbol{\mathfrak{M}}})\|\boldsymbol{\mathcal{X}}\|_{F}^{2}\leq\|\boldsymbol{\mathfrak{M}}(\boldsymbol{\mathcal{X}})\|_{2}^{2}\leq(1+\delta^{\boldsymbol{\mathfrak{M}}})\|\boldsymbol{\mathcal{X}}\|_{F}^{2}
\end{equation}
holds for all tensors $\boldsymbol{\mathcal{X}}\in\mathbb{R}^{n_{1} \times n_{2} \times n_{3}}$ whose tubal rank is at most $r$.
\end{definition}
Note that the above definition is the natural generalization of the RIP from vectors and matrices to tensors. However, similar to LRMR and different from CS, One cannot view our t-RIC for low-tubal-rank tensor as the condition that all sub-tensors of $\boldsymbol{\mathfrak{M}}$ of a given size are well conditioned. Surprisingly, we are still able to derive analogous recovery results for the LRTR problem if the t-RIC of $\boldsymbol{\mathfrak{M}}$ satisfies certain conditions. As pointed out by an anonymous reviewer, we should explain why we assume that $\boldsymbol{\mathfrak{M}}$ contains i.i.d. Gaussian entries. That's because in our previous work \cite{zhang2019tensor} we have proved that the random Gaussian measurement operator $\boldsymbol{\mathfrak{M}}$ satisfies a t-RIP at tubal rank $r$ with high probability.

Next, let us review some existing theoretical results that will provide us with some inspiration. For constrained CS and LRMR, different conditions on the RIC have been introduced and studied in the literature \cite{candes2008restricted,Candes2011Tight,Cai2013Sharp}, etc. Among these sufficient conditions, especially, Cai and Zhang \cite{Cai2013Sparse} showed that for any given $t\geq4/3$, the RIC $\delta_{tr}^{\boldsymbol{M}}<\sqrt{\frac{t-1}{t}}$ for the vector case is sharp (the RIC $\delta_{tr}^{\boldsymbol{\mathcal{M}}}<\sqrt{\frac{t-1}{t}}$ for the matrix case) ensures the exact recovery in the noiseless case and stable recovery in the noisy case for $r$-sparse signals (matrices with rank at most $r$). In addition, Zhang and Li \cite{Zhang2018A} obtained another part of the sharp condition, that is $\delta_{tr}^{\boldsymbol{M}}<\frac{t}{4-t} (\delta_{tr}^{\boldsymbol{\mathcal{M}}}<\frac{t}{4-t})$ with $0<t<4/3$. The results mentioned above are currently the best in the field. In view of unconstrained sparse vector recovery, as far as we know that Zhu \cite{Zhu2008Stable} first studied this kind of problem in 2008 and he pointed out that $r$-sparse signals can be recovered stably if $\delta_{4r}^{\boldsymbol{M}}+2\delta_{5r}^{\boldsymbol{M}}<1$. Next, in 2015, Shen et al. \cite{Shen2015Stable} got a sufficient condition $\delta_{2r}^{\boldsymbol{M}}<0.2$ under redundant tight frames. Recently, Ge et al. \cite{Ge2018Stable} proved that if the noisy vector $\boldsymbol{w}$ satisfies the $\ell_{\infty}$ bounded noise constraint (i.e., $\|\boldsymbol{M}^{*}\boldsymbol{w}\|_{\infty}\leq\lambda/2$) and $\delta_{tr}^{\boldsymbol{M}}<\sqrt{\frac{t-1}{t+8}}$ with $t>1$, then $r$-sparse signals can be stably recovered. For unconstrained LRMR, the sufficient condition $\delta_{tr}^{\boldsymbol{\mathcal{M}}}<\sqrt{(t-1)/t}$ for $t>1$  and error upper bound estimation have been derived in our previous work \cite{wang2019low}.

Equipped with the t-RIP, in this paper, we aim to construct sufficient conditions for stable low-tubal-rank tensor recovery and obtain an upper bound estimate of error via solving (\ref{tensor nuclear norm min unconstrained}). The rest of the paper is organized as follows. In Section \ref{Notations and Preliminaries}, we introduce some notations and definitions. In Section \ref{Some Key Lemmas}, we give some key lemmas. In Section \ref{Main Result}, our main result is presented. In Section \ref{Numerical experiments}, some numerical experiments are conducted to support our analysis. The conclusion is addressed in Section \ref{Conclusion}. Finally, \ref{Appendix A} provides the proof of Lemma \ref{Lemma-2}.

\section{Notations and Preliminaries}
\label{Notations and Preliminaries}
We use lowercase letters for the entries, e.g. $x$, boldface letters for vectors, e.g. $\boldsymbol{x}$, capitalized boldface letters for matrices, e.g. $\boldsymbol{X}$ and capitalized boldface calligraphic letters for tensors, e.g. $\boldsymbol{\mathcal{X}}$. For a third-order tensor $\boldsymbol{\mathcal{X}}$, $\boldsymbol{\mathcal{X}}(i,:,:)$, $\boldsymbol{\mathcal{X}}(:,i,:)$ and $\boldsymbol{\mathcal{X}}(:,:,i)$ are used to represent the $i$th horizontal, lateral, and frontal slice. The frontal slice $\boldsymbol{\mathcal{X}}(:,:,i)$ can also be denoted as $\boldsymbol{X}^{(i)}$. The tube is denoted as $\boldsymbol{\mathcal{X}}(i,j,:)$. We denote the Frobenius norm as $\|\boldsymbol{\mathcal{X}}\|_{F}=\sqrt{\sum_{ijk}|x_{ijk}|^{2}}$. Defining some norms of matrix is also necessary. We denote by $\|\boldsymbol{X}\|_{F}=\sqrt{\sum_{ij}|x_{ij}|^{2}}=\sqrt{\sum_{i}\sigma_{i}^{2}(\boldsymbol{X})}$ the Frobenius norm of $\boldsymbol{X}$ and denote by $\|\boldsymbol{X}\|_{*}=\sum_{i}\sigma_{i}(\boldsymbol{X})$ the nuclear norm of $\boldsymbol{X}$, where $\sigma_{i}(\boldsymbol{X})$'s are the singular values of $\boldsymbol{X}$ and $\sigma(\boldsymbol{X})$ represents the singular value vector of matrix $\boldsymbol{X}$. Given a positive integer $\kappa$, we denote $[\kappa]=\{1,2,\cdots,\kappa\}$ and $\Gamma^{c}=[\kappa]\setminus\Gamma$ for any $\Gamma\subset[\kappa]$. The set of indices of the nonzero entries of a vector $\boldsymbol{v}$ is called the support of $\boldsymbol{v}$ and denoted as $\supp(\boldsymbol{v})$. $|\Gamma|$ is the cardinality for the index set.

For a third-order tensor $\boldsymbol{\mathcal{X}}$, let $\boldsymbol{\bar{\mathcal{X}}}$ be the discrete Fourier transform (DFT) along the third dimension of $\boldsymbol{\mathcal{X}}$, i.e., $\boldsymbol{\bar{\mathcal{X}}}=\rm fft(\boldsymbol{\mathcal{X}},[],3)$. Similarly, $\boldsymbol{\mathcal{X}}$ can be calculated from $\boldsymbol{\bar{\mathcal{X}}}$ by $\boldsymbol{\mathcal{X}}=\rm ifft(\boldsymbol{\bar{\mathcal{X}}},[],3)$. Let $\boldsymbol{\bar{X}}\in\mathbb{R}^{n_{1}n_{3}\times n_{2}n_{3}}$ be the block diagonal matrix with each block on diagonal as the frontal slice $\boldsymbol{\bar{X}}^{(i)}$ of $\boldsymbol{\bar{\mathcal{X}}}$, i.e.,
\begin{equation*}
\boldsymbol{\bar{X}}=\bdiag(\boldsymbol{\bar{\mathcal{X}}})=\left(
                                                                    \begin{array}{cccc}
                                                                      \boldsymbol{\bar{X}}^{(1)} &  &  &  \\
                                                                       & \boldsymbol{\bar{X}}^{(2)} &  &  \\
                                                                       &  & \ddots &  \\
                                                                       &  &  & \boldsymbol{\bar{X}}^{(n_{3})} \\
                                                                    \end{array}
                                                                  \right),
\end{equation*}
and $\bcirc(\boldsymbol{\mathcal{X}})\in\mathbb{R}^{n_{1}n_{3}\times n_{2}n_{3}}$ be the block circular matrix, i.e.,                                                                                                                           \begin{equation*}
\bcirc(\boldsymbol{\mathcal{X}})=\left(
                                                                    \begin{array}{cccc}
                                                                      \boldsymbol{X}^{(1)} & \boldsymbol{X}^{(n_{3})} & \cdots & \boldsymbol{X}^{(2)} \\
                                                                      \boldsymbol{X}^{(2)} & \boldsymbol{X}^{(1)} & \cdots & \boldsymbol{X}^{(3)} \\
                                                                      \vdots & \vdots & \ddots & \vdots \\
                                                                      \boldsymbol{X}^{(n_{3})} & \boldsymbol{X}^{(n_{3}-1)} & \cdots & \boldsymbol{X}^{(1)} \\
                                                                    \end{array}
                                                                  \right).
\end{equation*}
The $\rm unfold$ operator and its inverse operator $\rm fold$ are, respectively, defined as
\begin{equation*}
\unfold(\boldsymbol{\mathcal{X}})=\left(
  \begin{array}{cccc}
    \boldsymbol{X}^{(1)} & \boldsymbol{X}^{(2)} & \cdots & \boldsymbol{X}^{(n_{3})} \\
  \end{array}
\right)^{T},~~
\fold(\unfold(\boldsymbol{\mathcal{X}}))=\boldsymbol{\mathcal{X}}.
\end{equation*}
Then tensor-tensor product (t-product) between two third-order tensors can be defined as follows.
\begin{definition}\quad(t-product \cite{Kilmer2011Factorization})\label{t-product}
For tensors $\boldsymbol{\mathcal{A}}\in\mathbb{R}^{n_{1}\times n_{2}\times n_{3}}$ and $\boldsymbol{\mathcal{B}}\in\mathbb{R}^{n_{2}\times n_{4}\times n_{3}}$, the t-product $\boldsymbol{\mathcal{A}}\ast\boldsymbol{\mathcal{B}}$ is defined to be a tensor of size $n_{1}\times n_{4}\times n_{3}$,
\begin{equation*}
\boldsymbol{\mathcal{A}}\ast\boldsymbol{\mathcal{B}}=\fold(\bcirc(\boldsymbol{\mathcal{A}})\cdot\unfold(\boldsymbol{\mathcal{B}})).
\end{equation*}
\end{definition}

\begin{definition}\quad(Conjugate transpose \cite{Kilmer2011Factorization})\label{Conjugate transpose}
The conjugate transpose of a tensor $\boldsymbol{\mathcal{X}}$ of size $n_{1}\times n_{2}\times n_{3}$ is the $n_{2}\times n_{1}\times n_{3}$ tensor $\boldsymbol{\mathcal{X}}^{*}$ obtained by conjugate transposing each of the frontal slice and then reversing the order of transposed frontal slices 2 through $n_{3}$.
\end{definition}

\begin{definition}\quad(Identity tensor \cite{Kilmer2011Factorization})\label{Identity tensor}
The identity tensor $\boldsymbol{\mathcal{I}}\in\mathbb{R}^{n\times n\times n_{3}}$ is the tensor whose first frontal slice is the $n\times n$ identity matrix, and other frontal slices are all zeros.
\end{definition}

\begin{definition}\quad(Orthogonal tensor \cite{Kilmer2011Factorization})\label{Orthogonal tensor)}
A tensor $\boldsymbol{\mathcal{Q}}\in\mathbb{R}^{n\times n\times n_{3}}$ is orthogonal if it satisfies
\begin{equation*}
  \boldsymbol{\mathcal{Q}}^{*}\ast\boldsymbol{\mathcal{Q}}=\boldsymbol{\mathcal{Q}}\ast\boldsymbol{\mathcal{Q}}^{*}=\boldsymbol{\mathcal{I}}.
\end{equation*}
\end{definition}

\begin{definition}\quad(F-diagonal tensor \cite{Kilmer2011Factorization})\label{F-diagonal tensor}
A tensor is called F-diagonal if each of its frontal slices is a diagonal matrix.
\end{definition}

\begin{theorem}\quad(t-SVD \cite{Kilmer2011Factorization})\label{t-SVD}
Let $\boldsymbol{\mathcal{X}}\in\mathbb{R}^{n_{1}\times n_{2}\times n_{3}}$, the t-SVD factorization of tensor $\boldsymbol{\mathcal{X}}$ is
\begin{equation*}
\boldsymbol{\mathcal{X}}=\boldsymbol{\mathcal{U}} \ast \boldsymbol{\mathcal{S}} \ast \boldsymbol{\mathcal{V}}^{*},
\end{equation*}
where $\boldsymbol{\mathcal{U}}\in\mathbb{R}^{n_{1}\times n_{1}\times n_{3}}$ and $\boldsymbol{\mathcal{V}}\in\mathbb{R}^{n_{2}\times n_{2}\times n_{3}}$ are orthogonal, $\boldsymbol{\mathcal{S}}\in\mathbb{R}^{n_{1}\times n_{2}\times n_{3}}$ is an F-diagonal tensor. Figure \ref{illustration of the t-SVD} illustrates the t-SVD factorization.
\end{theorem}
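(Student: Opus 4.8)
The plan is to prove existence of the factorization by passing to the Fourier domain along the third mode, where the t-product collapses into independent matrix multiplications, and then invoking the ordinary matrix SVD on each frontal slice. The cornerstone is the classical fact that the block circulant matrix is block-diagonalized by the DFT matrix $\boldsymbol{F}_{n_3}$: writing $\boldsymbol{\bar{X}}=\bdiag(\boldsymbol{\bar{\mathcal{X}}})$ for the block-diagonal matrix assembled from the frontal slices of $\boldsymbol{\bar{\mathcal{X}}}=\mathrm{fft}(\boldsymbol{\mathcal{X}},[],3)$, one has $(\boldsymbol{F}_{n_3}\otimes\boldsymbol{I}_{n_1})\,\bcirc(\boldsymbol{\mathcal{X}})\,(\boldsymbol{F}_{n_3}^{-1}\otimes\boldsymbol{I}_{n_2})=\boldsymbol{\bar{X}}$. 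From the definition of the t-product I would first record the multiplication rule this induces: a t-product $\boldsymbol{\mathcal{Z}}=\boldsymbol{\mathcal{A}}\ast\boldsymbol{\mathcal{B}}$ holds if and only if $\boldsymbol{\bar{Z}}=\boldsymbol{\bar{A}}\,\boldsymbol{\bar{B}}$, i.e. if and only if the frontal slices satisfy $\boldsymbol{\bar{Z}}^{(i)}=\boldsymbol{\bar{A}}^{(i)}\boldsymbol{\bar{B}}^{(i)}$ for every $i\in[n_3]$; the conjugate transpose and the identity tensor likewise correspond slice-wise to the ordinary conjugate transpose and the identity matrix. This dictionary reduces every tensor identity in the statement to a family of $n_3$ matrix identities indexed by the frontal slices.

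Second, for each $i\in[n_3]$ I would take the matrix SVD $\boldsymbol{\bar{X}}^{(i)}=\boldsymbol{\bar{U}}^{(i)}\boldsymbol{\bar{S}}^{(i)}(\boldsymbol{\bar{V}}^{(i)})^{*}$, with $\boldsymbol{\bar{U}}^{(i)}\in\mathbb{C}^{n_1\times n_1}$ and $\boldsymbol{\bar{V}}^{(i)}\in\mathbb{C}^{n_2\times n_2}$ unitary and $\boldsymbol{\bar{S}}^{(i)}\in\mathbb{R}^{n_1\times n_2}$ nonnegative diagonal. Stacking these blocks gives candidate Fourier-domain tensors $\boldsymbol{\bar{\mathcal{U}}},\boldsymbol{\bar{\mathcal{S}}},\boldsymbol{\bar{\mathcal{V}}}$, and I would define $\boldsymbol{\mathcal{U}},\boldsymbol{\mathcal{S}},\boldsymbol{\mathcal{V}}$ as their inverse transforms $\mathrm{ifft}(\cdot,[],3)$. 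By the multiplication dictionary, the slice-wise SVDs assemble exactly into $\boldsymbol{\mathcal{X}}=\boldsymbol{\mathcal{U}}\ast\boldsymbol{\mathcal{S}}\ast\boldsymbol{\mathcal{V}}^{*}$; unitarity of each $\boldsymbol{\bar{U}}^{(i)},\boldsymbol{\bar{V}}^{(i)}$ translates into $\boldsymbol{\mathcal{U}}^{*}\ast\boldsymbol{\mathcal{U}}=\boldsymbol{\mathcal{U}}\ast\boldsymbol{\mathcal{U}}^{*}=\boldsymbol{\mathcal{I}}$ (and likewise for $\boldsymbol{\mathcal{V}}$), giving the orthogonality required by Definition \ref{Orthogonal tensor)}; and since every $\boldsymbol{\bar{S}}^{(i)}$ is diagonal, every frontal slice of $\boldsymbol{\bar{\mathcal{S}}}$—hence, after the inverse transform, of $\boldsymbol{\mathcal{S}}$—is diagonal, so $\boldsymbol{\mathcal{S}}$ is F-diagonal in the sense of Definition \ref{F-diagonal tensor}.

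The step I expect to be the main obstacle is guaranteeing that the inverse transform returns \emph{real} tensors $\boldsymbol{\mathcal{U}},\boldsymbol{\mathcal{S}},\boldsymbol{\mathcal{V}}$, since an arbitrary choice of SVD for each complex slice need not respect the conjugate symmetry that the DFT of a real tensor obeys. Because $\boldsymbol{\mathcal{X}}$ is real, its Fourier slices satisfy $\boldsymbol{\bar{X}}^{(1)}\in\mathbb{R}^{n_1\times n_2}$ and $\boldsymbol{\bar{X}}^{(i)}=\overline{\boldsymbol{\bar{X}}^{(n_3-i+2)}}$ for $i=2,\dots,n_3$. The remedy is to fix the SVDs only for indices $i\le n_3/2+1$: compute the real slices by a real SVD, and for each conjugate pair set $\boldsymbol{\bar{U}}^{(n_3-i+2)}:=\overline{\boldsymbol{\bar{U}}^{(i)}}$, $\boldsymbol{\bar{S}}^{(n_3-i+2)}:=\boldsymbol{\bar{S}}^{(i)}$, $\boldsymbol{\bar{V}}^{(n_3-i+2)}:=\overline{\boldsymbol{\bar{V}}^{(i)}}$. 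This is consistent because $\overline{\boldsymbol{\bar{U}}^{(i)}}\,\boldsymbol{\bar{S}}^{(i)}(\overline{\boldsymbol{\bar{V}}^{(i)}})^{*}$ is a valid SVD of $\overline{\boldsymbol{\bar{X}}^{(i)}}=\boldsymbol{\bar{X}}^{(n_3-i+2)}$, and the enforced conjugate symmetry is exactly the condition under which $\mathrm{ifft}(\cdot,[],3)$ produces real output. Verifying this symmetry-preservation—and checking that it leaves the per-slice unitarity and diagonality established above undisturbed—is the crux; once it is in place, transforming the three assembled identities back through $\mathrm{ifft}$ completes the proof.
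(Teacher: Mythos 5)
Your proposal is correct, but note that the paper itself offers no proof of this statement: it is imported wholesale by citation to \cite{Kilmer2011Factorization}, and your argument is precisely the canonical construction given in that reference — block-diagonalization of $\bcirc(\boldsymbol{\mathcal{X}})$ by the DFT, slice-wise matrix SVD in the Fourier domain, and reassembly through the inverse transform. You also correctly identify and resolve the one genuinely delicate point, namely that the per-slice SVD factors must be chosen consistently across conjugate-symmetric slice pairs (real SVDs on the real slices, conjugated factors on paired slices) so that $\mathrm{ifft}(\cdot,[\,],3)$ returns real tensors $\boldsymbol{\mathcal{U}},\boldsymbol{\mathcal{S}},\boldsymbol{\mathcal{V}}$ satisfying the orthogonality and F-diagonality requirements.
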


\begin{figure}[ht]
\begin{center}
{\includegraphics[width=0.9\textwidth]{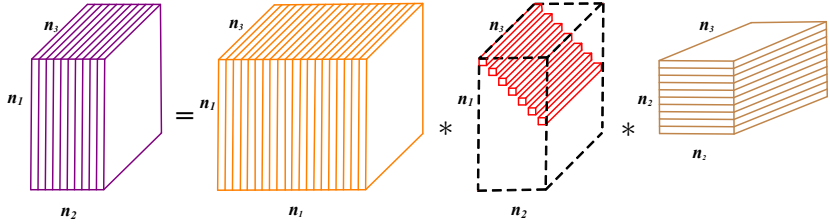}}
\end{center}
\vspace*{-14pt}
\caption{An illustration of the t-SVD of an $n_{1}\times n_{1}\times n_{3}$ tensor.}\label{illustration of the t-SVD}
\end{figure}

\begin{remark}\quad
For $\kappa=\min(n_{1},n_{2})$, the t-SVD of $\boldsymbol{\mathcal{X}}$ can be written
\begin{equation*}
\boldsymbol{\mathcal{X}}=\sum\nolimits_{i=1}^{\kappa}\boldsymbol{\mathcal{U}}_{\boldsymbol{\mathcal{X}}}(:,i,:) \ast \boldsymbol{\mathcal{S}}_{\boldsymbol{\mathcal{X}}}(i,i,:) \ast \boldsymbol{\mathcal{V}}_{\boldsymbol{\mathcal{X}}}(:,i,:)^{*}.
\end{equation*}
The diagonal vector of the first frontal slice of $\boldsymbol{\mathcal{S}}_{\boldsymbol{\mathcal{X}}}$ is denoted as $\boldsymbol{s}_{\boldsymbol{\mathcal{X}}}$. The best $r$-term approximation of $\boldsymbol{\mathcal{H}}$ with the tubal rank at most $r$ is denoted by
\begin{equation*}
\boldsymbol{\mathcal{X}}_{\max(r)}=\arg\min_{\rank_{t}(\boldsymbol{\tilde{\mathcal{X}}})\leq r}\|\boldsymbol{\mathcal{X}}-\boldsymbol{\tilde{\mathcal{X}}}\|_{F}=\sum\nolimits_{i=1}^{r}\boldsymbol{\mathcal{U}}_{\boldsymbol{\mathcal{X}}}(:,i,:) \ast \boldsymbol{\mathcal{S}}_{\boldsymbol{\mathcal{X}}}(i,i,:) \ast \boldsymbol{\mathcal{V}}_{\boldsymbol{\mathcal{X}}}(:,i,:)^{*},
\end{equation*}
and $\boldsymbol{\mathcal{X}}_{-\max(r)}=\boldsymbol{\mathcal{X}}-\boldsymbol{\mathcal{X}}_{\max(r)}$. In addition, for index set $\Gamma$, we have
\begin{equation*}
\boldsymbol{\mathcal{X}}_{\Gamma}=\sum\nolimits_{i\in\Gamma}\boldsymbol{\mathcal{U}}_{\boldsymbol{\mathcal{X}}}(:,i,:) \ast \boldsymbol{\mathcal{S}}_{\boldsymbol{\mathcal{X}}}(i,i,:) \ast \boldsymbol{\mathcal{V}}_{\boldsymbol{\mathcal{X}}}(:,i,:)^{*}.
\end{equation*}
\end{remark}

\begin{definition}\quad(Tensor tubal rank \cite{kilmer2013third})\label{Tensor tubal rank}
For $\boldsymbol{\mathcal{X}}\in\mathbb{R}^{n_{1}\times n_{2}\times n_{3}}$, the tensor tubal rank, denoted
as $\rank_{t}(\boldsymbol{\mathcal{X}})$, is defined as the number of nonzero singular tubes of $\boldsymbol{\mathcal{S}}$, where $\boldsymbol{\mathcal{S}}$ is from the t-SVD of $\boldsymbol{\mathcal{X}}=\boldsymbol{\mathcal{U}} \ast \boldsymbol{\mathcal{S}} \ast \boldsymbol{\mathcal{V}}^{*}$. We can write
\begin{equation*}
\rank\nolimits_{t}(\boldsymbol{\mathcal{X}})=\sharp\{i:\boldsymbol{\mathcal{S}}(i,i,:)\neq\boldsymbol{0}\}=\sharp\{i:\boldsymbol{\mathcal{S}}(i,i,1)\neq0\}.
\end{equation*}
\end{definition}
\begin{definition}\quad(Tensor average rank \cite{Lu2019Tensor})\label{Tensor average rank}
For $\boldsymbol{\mathcal{X}}\in\mathbb{R}^{n_{1}\times n_{2}\times n_{3}}$, the tensor average rank, denoted
as $\rank_{a}(\boldsymbol{\mathcal{X}})$, is defined as
\begin{equation*}
\rank\nolimits_{a}(\boldsymbol{\mathcal{X}})=\frac{1}{n_{3}}\rank(\bcirc(\boldsymbol{\mathcal{X}}))=\frac{1}{n_{3}}\rank(\bdiag(\boldsymbol{\bar{X}})).
\end{equation*}
\end{definition}
\begin{definition}\quad(Tensor nuclear norm \cite{Lu2019Tensor})\label{Tensor nuclear norm}
Let $\boldsymbol{\mathcal{X}}=\boldsymbol{\mathcal{U}} \ast \boldsymbol{\mathcal{S}} \ast \boldsymbol{\mathcal{V}}^{*}$ be the t-SVD of $\boldsymbol{\mathcal{X}}\in\mathbb{R}^{n_{1}\times n_{2}\times n_{3}}$. The tensor nuclear norm of $\boldsymbol{\mathcal{X}}$ is defined as $\|\boldsymbol{\mathcal{X}}\|_{*}:=\sum_{i=1}^{r}\boldsymbol{\mathcal{S}}(i,i,1)$, where $r=\rank_{t}(\boldsymbol{\mathcal{X}})$.
\end{definition}

\begin{proposition}\quad
For a third-order tensor $\boldsymbol{\mathcal{X}}$, we have the following properties
\begin{eqnarray}
\|\boldsymbol{\mathcal{X}}\|_{F}&=&\frac{1}{\sqrt{n_{3}}}\|\boldsymbol{\bar{X}}\|_{F},\label{Property 1}\\
\|\boldsymbol{\mathcal{X}}\|_{*}&=&\frac{1}{n_{3}}\|\boldsymbol{\bar{X}}\|_{*}.\label{Property 2}\\
\rank(\boldsymbol{\bar{X}})&\leq& n_{3}\rank\nolimits_{t}(\boldsymbol{\mathcal{X}}).\label{Property 3}
\end{eqnarray}
\end{proposition}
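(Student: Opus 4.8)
The plan is to move everything into the Fourier domain along the third mode, where the t-product decouples into independent matrix operations on the frontal slices. By definition $\boldsymbol{\bar{X}}=\bdiag(\boldsymbol{\bar{\mathcal{X}}})$ is block diagonal with blocks $\boldsymbol{\bar{X}}^{(k)}$, $k=1,\dots,n_3$ (indeed $(\boldsymbol{F}_{n_3}\otimes\boldsymbol{I}_{n_1})\,\bcirc(\boldsymbol{\mathcal{X}})\,(\boldsymbol{F}_{n_3}^{-1}\otimes\boldsymbol{I}_{n_2})=\boldsymbol{\bar{X}}$, where $\boldsymbol{F}_{n_3}$ is the unnormalized DFT matrix), so its Frobenius norm, nuclear norm, and rank all split across the blocks. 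I would also record that the t-SVD is computed slice-wise in the Fourier domain: orthogonality of $\boldsymbol{\mathcal{U}},\boldsymbol{\mathcal{V}}$ makes each $\boldsymbol{\bar{U}}^{(k)},\boldsymbol{\bar{V}}^{(k)}$ unitary and each $\boldsymbol{\bar{S}}^{(k)}$ a real nonnegative diagonal matrix, so that $\boldsymbol{\bar{X}}^{(k)}=\boldsymbol{\bar{U}}^{(k)}\boldsymbol{\bar{S}}^{(k)}(\boldsymbol{\bar{V}}^{(k)})^{*}$ is a genuine matrix SVD whose singular values are the diagonal entries of $\boldsymbol{\bar{S}}^{(k)}$.

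Granting this setup, (\ref{Property 1}) follows from Parseval's identity applied tube by tube: the length-$n_3$ DFT scales each tube's squared $\ell_2$-norm by $n_3$, hence $\|\boldsymbol{\bar{X}}\|_F^2=\|\boldsymbol{\bar{\mathcal{X}}}\|_F^2=n_3\|\boldsymbol{\mathcal{X}}\|_F^2$, which rearranges to $\|\boldsymbol{\mathcal{X}}\|_F=\frac{1}{\sqrt{n_3}}\|\boldsymbol{\bar{X}}\|_F$. For (\ref{Property 3}), set $r:=\rank_{t}(\boldsymbol{\mathcal{X}})$ and observe that whenever a singular tube vanishes, i.e. $\boldsymbol{\mathcal{S}}(i,i,:)=\boldsymbol{0}$ for $i>r$, its DFT vanishes too, so $\boldsymbol{\bar{S}}^{(k)}(i,i)=0$ for every $k$. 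Consequently each block satisfies $\rank(\boldsymbol{\bar{X}}^{(k)})=\sharp\{i:\boldsymbol{\bar{S}}^{(k)}(i,i)\neq0\}\leq r$, and summing over the $n_3$ blocks gives $\rank(\boldsymbol{\bar{X}})=\sum_k\rank(\boldsymbol{\bar{X}}^{(k)})\leq n_3 r$.

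The only genuinely non-routine identity is (\ref{Property 2}), where the definition $\|\boldsymbol{\mathcal{X}}\|_*=\sum_{i=1}^{r}\boldsymbol{\mathcal{S}}(i,i,1)$ must be matched with $\frac{1}{n_3}\|\boldsymbol{\bar{X}}\|_*$. Using the block-diagonal splitting together with $\|\boldsymbol{\bar{X}}^{(k)}\|_*=\sum_i\boldsymbol{\bar{S}}^{(k)}(i,i)$, I would write $\|\boldsymbol{\bar{X}}\|_*=\sum_k\sum_i\boldsymbol{\bar{S}}^{(k)}(i,i)$, and then invoke the inverse-DFT DC-component identity: the first time-domain sample of any tube equals $1/n_3$ times the sum of its frequency samples, so $\boldsymbol{\mathcal{S}}(i,i,1)=\frac{1}{n_3}\sum_k\boldsymbol{\bar{S}}^{(k)}(i,i)$. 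Summing over $i$ then yields $\sum_{i=1}^{r}\boldsymbol{\mathcal{S}}(i,i,1)=\frac{1}{n_3}\|\boldsymbol{\bar{X}}\|_*$, as required.

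I expect the main obstacle to be exactly this last step. One must verify carefully that the diagonal entries $\boldsymbol{\bar{S}}^{(k)}(i,i)$ truly are the real, nonnegative singular values of the frontal block (so that their slice-wise sum is honestly the nuclear norm, not merely a sum of Fourier coefficients of the time-domain singular tube), and that the terms with $i>r$ drop out consistently across all three identities. Everything else is bookkeeping with Parseval and with the fact that the Frobenius/nuclear norms and ranks of block-diagonal matrices add over the diagonal blocks.
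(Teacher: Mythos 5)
The paper itself gives no proof of this proposition; it is imported as a known fact from the t-product literature \cite{Lu2019Tensor}, so there is no line-by-line comparison to make. Your argument is correct and is precisely the standard one: DFT block-diagonalization of $\bcirc(\boldsymbol{\mathcal{X}})$, Parseval applied tube-by-tube for (\ref{Property 1}), and blockwise additivity of rank and nuclear norm combined with the DC-component identity $\boldsymbol{\mathcal{S}}(i,i,1)=\frac{1}{n_{3}}\sum_{k}\boldsymbol{\bar{S}}^{(k)}(i,i)$ for (\ref{Property 2}) and (\ref{Property 3}). The one fact you rightly flag as needing care — that the diagonal entries of each $\boldsymbol{\bar{S}}^{(k)}$ are real and nonnegative, so the slice-wise factorizations $\boldsymbol{\bar{X}}^{(k)}=\boldsymbol{\bar{U}}^{(k)}\boldsymbol{\bar{S}}^{(k)}(\boldsymbol{\bar{V}}^{(k)})^{*}$ are genuine matrix SVDs — is indeed not automatic from the definition of an F-diagonal tensor, but it is guaranteed by the canonical construction of the t-SVD (matrix SVDs of the Fourier-domain frontal slices followed by inverse DFT), which is the convention that the paper's Definitions \ref{Tensor tubal rank} and \ref{Tensor nuclear norm} implicitly assume; with that convention made explicit, your proof is complete.
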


\section{Some Key Lemmas}
\label{Some Key Lemmas}
We present the following lemmas, which will play a key role in proving our sufficient conditions for low-tubal-rank tensor recovery.
\begin{lemma}\quad\cite{Cai2013Sparse}\quad\label{Sparse-Representation}
For a positive number $\phi$ and a positive integer $s$, define the polytope $T(\phi,s)\subset \mathbb{R}^{n}$ by
\begin{equation*}
T(\phi, s)=\{\boldsymbol{v}\in\mathbb{R}^{n}: \|\boldsymbol{v}\|_{\infty}\leq\phi, \|\boldsymbol{v}\|_{1}\leq s\phi\}.
\end{equation*}
For any $\boldsymbol{v}\in\mathbb{R}^{n}$, define the set of sparse vectors $U(\phi, s, \boldsymbol{v})\subset\mathbb{R}^{n}$ by
\begin{equation*}
U(\phi, s, \boldsymbol{v})=\{\boldsymbol{u}\in\mathbb{R}^{n}: \supp(\boldsymbol{u})\subseteq\supp(\boldsymbol{v}), \|\boldsymbol{u}\|_{0}\leq s, \|\boldsymbol{u}\|_{1}=\|\boldsymbol{v}\|_{1}, \|\boldsymbol{u}\|_{\infty}\leq\phi\}.
\end{equation*}
Then $\boldsymbol{v}\in T(\phi, s)$ if and only if $\boldsymbol{v}$ is in the convex hull of $U(\phi, s, \boldsymbol{v})$. In particular, any $\boldsymbol{v}\in T(\phi, s)$ can be expressed as
\begin{equation*}
\boldsymbol{v}=\sum_{i=1}^{N}\gamma_{i}\boldsymbol{u}_{i}
\end{equation*}
where $\boldsymbol{u}_{i}\in U(\phi, s, \boldsymbol{v})$ and $0\leq\gamma_{i}\leq1$, $\sum_{i=1}^{N}\gamma_{i}=1$.
\end{lemma}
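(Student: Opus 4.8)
The plan is to prove the two implications of the equivalence separately, treating the ``only if'' direction as the substantive one.

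For the easy direction, suppose $\boldsymbol{v} = \sum_{i=1}^{N} \gamma_{i} \boldsymbol{u}_{i}$ is a convex combination of vectors $\boldsymbol{u}_{i} \in U(\phi, s, \boldsymbol{v})$. Convexity of the $\ell_{\infty}$ norm gives $\|\boldsymbol{v}\|_{\infty} \leq \sum_{i} \gamma_{i} \|\boldsymbol{u}_{i}\|_{\infty} \leq \phi$, and since each $\boldsymbol{u}_{i}$ is $s$-sparse with $\|\boldsymbol{u}_{i}\|_{\infty} \leq \phi$ we have $\|\boldsymbol{u}_{i}\|_{1} \leq \|\boldsymbol{u}_{i}\|_{0} \|\boldsymbol{u}_{i}\|_{\infty} \leq s\phi$, whence $\|\boldsymbol{v}\|_{1} \leq \sum_{i} \gamma_{i} \|\boldsymbol{u}_{i}\|_{1} \leq s\phi$. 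Thus $\boldsymbol{v} \in T(\phi, s)$, establishing that the convex hull of $U(\phi, s, \boldsymbol{v})$ is contained in $T(\phi, s)$.

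For the converse, fix $\boldsymbol{v} \in T(\phi, s)$ and write $S = \supp(\boldsymbol{v})$, $L = \|\boldsymbol{v}\|_{1}$. Since all the defining constraints are invariant under flipping the sign of a coordinate, I would first reduce to the case $\boldsymbol{v} \geq 0$ by recording the signs on $S$ and factoring them out, the general claim then following by restoring signs coordinatewise (the sign-matched vectors produced below are easily checked to still lie in $U(\phi,s,\boldsymbol{v})$). If $\|\boldsymbol{v}\|_{0} \leq s$ the claim is trivial since then $\boldsymbol{v} \in U(\phi, s, \boldsymbol{v})$, so assume $\|\boldsymbol{v}\|_{0} > s$. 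The key step is to realize $\boldsymbol{v}$ as a point of the polytope
\begin{equation*}
Q = \left\{\boldsymbol{u} \in \mathbb{R}^{n} : \supp(\boldsymbol{u}) \subseteq S,\ 0 \leq u_{j} \leq \phi \ (j \in S),\ \sum\nolimits_{j} u_{j} = L\right\},
\end{equation*}
which indeed contains $\boldsymbol{v}$, and then to invoke Minkowski's theorem (a bounded polyhedron is the convex hull of its vertices), so that $\boldsymbol{v}$ is a convex combination of the vertices of $Q$.

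The crux is to show that every vertex $\boldsymbol{u}$ of $Q$ lies in $U(\phi, s, \boldsymbol{v})$; the only nontrivial requirement is $\|\boldsymbol{u}\|_{0} \leq s$. I would argue by counting active constraints: $Q$ lives in the $(|S|-1)$-dimensional affine subspace cut out by the single equality $\sum_{j} u_{j} = L$, so at a vertex at least $|S|-1$ of the box constraints $u_{j} \in \{0, \phi\}$ must be tight, leaving at most one coordinate strictly inside $(0, \phi)$. Writing $a$ for the number of coordinates equal to $\phi$, the constraint $\sum_{j} u_{j} = L \leq s\phi$ forces $a \leq s$ when there is no interior coordinate, and $a \leq s-1$ when one coordinate is interior (since then $a\phi < L \leq s\phi$); in either case $\boldsymbol{u}$ has at most $s$ nonzeros. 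This vertex-sparsity count is the main obstacle and the heart of the lemma. Once it is in hand, each vertex additionally satisfies $\|\boldsymbol{u}\|_{1} = L = \|\boldsymbol{v}\|_{1}$, $\|\boldsymbol{u}\|_{\infty} \leq \phi$ and $\supp(\boldsymbol{u}) \subseteq S$ by construction, so $\boldsymbol{u} \in U(\phi, s, \boldsymbol{v})$, and the Minkowski decomposition of $\boldsymbol{v}$ is exactly the desired convex combination.
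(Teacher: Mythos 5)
Your proof is correct, but it takes a genuinely different route from the source of this result. Note first that the paper itself gives no proof of this lemma at all---it is quoted directly from Cai and Zhang \cite{Cai2013Sparse}---so the meaningful comparison is with the original argument there, which is an elementary induction on the support size $\|\boldsymbol{v}\|_{0}$: a point of $T(\phi,s)$ with more than $s$ nonzero entries is explicitly split as a convex combination of points that are strictly closer to being $s$-sparse (fewer nonzero coordinates, or more coordinates saturated at $\phi$), and the recursion terminates because at most $s$ coordinates can equal $\phi$. You instead reduce to $\boldsymbol{v}\geq 0$ by sign symmetry, embed $\boldsymbol{v}$ in the polytope $Q$ cut out by $\supp(\boldsymbol{u})\subseteq\supp(\boldsymbol{v})$, the box constraints $0\leq u_{j}\leq\phi$, and the single equality $\sum_{j}u_{j}=\|\boldsymbol{v}\|_{1}$, invoke the Minkowski--Weyl theorem, and prove vertex sparsity by counting active constraints. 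That count is right: at a vertex at least $|S|-1$ box constraints are tight, so at most one coordinate lies strictly in $(0,\phi)$, and your two cases ($a\leq s$ with no interior coordinate; $a\leq s-1$ with one, since then $a\phi<\|\boldsymbol{v}\|_{1}\leq s\phi$ forces the strict inequality) give $\|\boldsymbol{u}\|_{0}\leq s$ in both. The verification that vertices lie in $U(\phi,s,\boldsymbol{v})$, the trivial case $\|\boldsymbol{v}\|_{0}\leq s$, the easy converse inclusion, and the sign restoration are all handled correctly. The trade-off between the two approaches: yours is shorter and more conceptual, outsourcing the combinatorial heart of the lemma to standard extreme-point theory of polyhedra; the Cai--Zhang induction is self-contained and fully constructive---it exhibits the convex weights algorithmically without appealing to any polyhedral machinery---which is why it can be presented as the ``elementary technique'' the present paper describes it to be.
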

This elementary technique introduced by T. Cai and A. Zhang \cite{Cai2013Sparse} shows that any point in a polytope can be represented as a convex combination of sparse vectors and makes the analysis surprisingly simple.

The following lemma shows that a suitable t-RIP condition implies the robust null space property \cite{Foucart2014stability} of the linear map $\boldsymbol{\mathfrak{M}}$.
\begin{lemma}\quad\label{Lemma-2}
Let the linear map $\boldsymbol{\mathfrak{M}}: \mathbb{R}^{n_{1}\times n_{2}\times n_{3}}\rightarrow\mathbb{R}^{n}$ satisfies the t-RIP of order $tr(t>1)$ with t-RIC $\delta_{tr}^{\boldsymbol{\mathfrak{M}}}\in(0,1)$. Then for any tensor $\boldsymbol{\mathcal{H}}\in\mathbb{R}^{n_{1}\times n_{2}\times n_{3}}$ and any subset $\Gamma\subset[\kappa]$ with $|\Gamma|=r$ and $\kappa=\min(n_{1},n_{2})$, it holds that
\begin{equation}\label{Lemma2-results}
\|\boldsymbol{\mathcal{H}}_{\Gamma}\|_{F}\leq \eta_{1}\|\boldsymbol{\mathfrak{M}}(\boldsymbol{\mathcal{H}})\|_{2}+\frac{\eta_{2}}{\sqrt{r}}\|\boldsymbol{\mathcal{H}}_{\Gamma^{c}}\|_{*},
\end{equation}
where
\begin{equation*}
\eta_{1}\triangleq\frac{2}{(1-\delta_{tr}^{\boldsymbol{\mathfrak{M}}})\sqrt{1+\delta_{tr}^{\boldsymbol{\mathfrak{M}}}}},~~\rm and~~ \eta_{2}\triangleq\frac{\sqrt{n_{3}}\delta_{tr}^{\boldsymbol{\mathfrak{M}}}}{\sqrt{(1-(\delta_{tr}^{\boldsymbol{\mathfrak{M}}})^{2})(t-1)}}.
\end{equation*}
\end{lemma}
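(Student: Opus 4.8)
The plan is to follow the polytope/convex-combination technique of Cai and Zhang (Lemma~\ref{Sparse-Representation}), transplanted into the t-SVD framework through the Fourier-domain identities \eqref{Property 1}--\eqref{Property 3}. First I would record the two consequences of Definition~\ref{Tensor RIP Definition} that do the real work. Since $\rank_{t}(\boldsymbol{\mathcal{H}}_{\Gamma})\leq r\leq tr$, the lower t-RIP bound gives $(1-\delta_{tr}^{\boldsymbol{\mathfrak{M}}})\|\boldsymbol{\mathcal{H}}_{\Gamma}\|_{F}^{2}\leq\|\boldsymbol{\mathfrak{M}}(\boldsymbol{\mathcal{H}}_{\Gamma})\|_{2}^{2}$. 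Moreover, for tensors $\boldsymbol{\mathcal{A}},\boldsymbol{\mathcal{B}}$ with disjoint singular-tube supports and $\rank_{t}(\boldsymbol{\mathcal{A}})+\rank_{t}(\boldsymbol{\mathcal{B}})\leq tr$, a polarization argument applied to $\boldsymbol{\mathcal{A}}\pm\boldsymbol{\mathcal{B}}$ --- both of tubal rank at most $tr$ and orthogonal, so that $\|\boldsymbol{\mathcal{A}}\pm\boldsymbol{\mathcal{B}}\|_{F}^{2}=\|\boldsymbol{\mathcal{A}}\|_{F}^{2}+\|\boldsymbol{\mathcal{B}}\|_{F}^{2}$ --- yields the cross-term estimate $|\langle\boldsymbol{\mathfrak{M}}(\boldsymbol{\mathcal{A}}),\boldsymbol{\mathfrak{M}}(\boldsymbol{\mathcal{B}})\rangle|\leq\delta_{tr}^{\boldsymbol{\mathfrak{M}}}\|\boldsymbol{\mathcal{A}}\|_{F}\|\boldsymbol{\mathcal{B}}\|_{F}$. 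Establishing that orthogonal t-SVD components are tubal-rank additive, so that the combined tensor really has tubal rank at most $tr$, is the first fact I would pin down, since it is what licenses the constant $\delta_{tr}^{\boldsymbol{\mathfrak{M}}}$ rather than a larger one.

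With these in hand I would expand $\|\boldsymbol{\mathfrak{M}}(\boldsymbol{\mathcal{H}}_{\Gamma})\|_{2}^{2}=\langle\boldsymbol{\mathfrak{M}}(\boldsymbol{\mathcal{H}}_{\Gamma}),\boldsymbol{\mathfrak{M}}(\boldsymbol{\mathcal{H}})\rangle-\langle\boldsymbol{\mathfrak{M}}(\boldsymbol{\mathcal{H}}_{\Gamma}),\boldsymbol{\mathfrak{M}}(\boldsymbol{\mathcal{H}}_{\Gamma^{c}})\rangle$. The first inner product is handled by Cauchy--Schwarz together with the upper t-RIP, giving $\sqrt{1+\delta_{tr}^{\boldsymbol{\mathfrak{M}}}}\,\|\boldsymbol{\mathcal{H}}_{\Gamma}\|_{F}\,\|\boldsymbol{\mathfrak{M}}(\boldsymbol{\mathcal{H}})\|_{2}$, and after division by $\|\boldsymbol{\mathcal{H}}_{\Gamma}\|_{F}$ it produces the $\eta_{1}$ term. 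The cross term with the tail is where the polytope machinery enters: the singular-value sequence of $\boldsymbol{\mathcal{H}}_{\Gamma^{c}}$ has $\ell_{1}$ norm exactly $\|\boldsymbol{\mathcal{H}}_{\Gamma^{c}}\|_{*}$ by Definition~\ref{Tensor nuclear norm}, so I would apply Lemma~\ref{Sparse-Representation} at sparsity level $(t-1)r$ to write $\boldsymbol{\mathcal{H}}_{\Gamma^{c}}=\sum_{j}\gamma_{j}\boldsymbol{\mathcal{B}}_{j}$ as a convex combination of tensors $\boldsymbol{\mathcal{B}}_{j}$ supported on $\Gamma^{c}$ with $\rank_{t}(\boldsymbol{\mathcal{B}}_{j})\leq(t-1)r$ and $\|\boldsymbol{\mathcal{B}}_{j}\|_{F}$ bounded by a multiple of $\|\boldsymbol{\mathcal{H}}_{\Gamma^{c}}\|_{*}/\sqrt{(t-1)r}$. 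Each $\boldsymbol{\mathcal{B}}_{j}$ is orthogonal to $\boldsymbol{\mathcal{H}}_{\Gamma}$ with $\rank_{t}(\boldsymbol{\mathcal{H}}_{\Gamma})+\rank_{t}(\boldsymbol{\mathcal{B}}_{j})\leq tr$, so the cross-term estimate applies term by term; summing against the weights $\gamma_{j}$ bounds $|\langle\boldsymbol{\mathfrak{M}}(\boldsymbol{\mathcal{H}}_{\Gamma}),\boldsymbol{\mathfrak{M}}(\boldsymbol{\mathcal{H}}_{\Gamma^{c}})\rangle|$ by $\delta_{tr}^{\boldsymbol{\mathfrak{M}}}\|\boldsymbol{\mathcal{H}}_{\Gamma}\|_{F}\,\|\boldsymbol{\mathcal{H}}_{\Gamma^{c}}\|_{*}/\sqrt{(t-1)r}$, up to the Fourier-domain factor. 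Assembling both contributions into $(1-\delta_{tr}^{\boldsymbol{\mathfrak{M}}})\|\boldsymbol{\mathcal{H}}_{\Gamma}\|_{F}^{2}$, dividing once more by $\|\boldsymbol{\mathcal{H}}_{\Gamma}\|_{F}$ and simplifying then delivers \eqref{Lemma2-results} with the stated $\eta_{1},\eta_{2}$.

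Two steps will demand real care. The tensor-specific bookkeeping is the tracking of the $\sqrt{n_{3}}$ factor: since Lemma~\ref{Sparse-Representation} is a statement about vectors, I must pass between the tensor norms and the singular-value vector through \eqref{Property 1} and \eqref{Property 2} (using \eqref{Property 3} to keep the effective rank controlled), and it is exactly this conversion that produces the $\sqrt{n_{3}}$ appearing in $\eta_{2}$ together with the $\sqrt{1-(\delta_{tr}^{\boldsymbol{\mathfrak{M}}})^{2}}$ in its denominator. The more delicate point, and the one I expect to be the main obstacle, is the decomposition itself: Lemma~\ref{Sparse-Representation} requires the tail singular values to be flat enough to lie in a polytope $T(\phi,(t-1)r)$, which is not automatic for an arbitrary $\Gamma$. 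I would arrange this by noting that the binding configuration is the one in which $\Gamma$ collects the $r$ largest singular tubes --- so that every tail value is dominated by $\|\boldsymbol{\mathcal{H}}_{\Gamma}\|_{*}/r$ --- and by reabsorbing any residual spike through an arithmetic--geometric mean split, the general $\Gamma$ following by the same device. Finally, collapsing the constants to exactly $\eta_{1}$ and $\eta_{2}$, in particular recovering $\sqrt{1-(\delta_{tr}^{\boldsymbol{\mathfrak{M}}})^{2}}$ rather than the cruder $1-\delta_{tr}^{\boldsymbol{\mathfrak{M}}}$, requires keeping the upper and lower t-RIC contributions separate throughout and combining them only at the very end.
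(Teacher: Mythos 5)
Your high-level route (lower t-RIP on the head, Cauchy--Schwarz plus upper t-RIP for the $\boldsymbol{\mathfrak{M}}(\boldsymbol{\mathcal{H}})$ term, and a polarization cross-term estimate against a polytope decomposition of the tail) is a legitimate classical strategy, but it has two genuine gaps, and the second is fatal to the lemma \emph{as stated}. First, your fix for the polytope-membership obstacle does not work. Lemma~\ref{Sparse-Representation} requires the tail singular values to satisfy $\|\boldsymbol{s}_{\boldsymbol{\mathcal{H}}_{\Gamma^{c}}}\|_{\infty}\leq\phi$ with $\phi=\|\boldsymbol{\mathcal{H}}_{\Gamma^{c}}\|_{*}/((t-1)r)$, i.e.\ flatness relative to the \emph{tail} nuclear norm; choosing $\Gamma$ as the $r$ largest tubes only gives domination by $\|\boldsymbol{\mathcal{H}}_{\Gamma}\|_{*}/r$, a bound relative to the \emph{head}. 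For instance, if the head values are all $1$ and the tail has a single value $1$ (rest zero), then the tail maximum exceeds $\phi$ whenever $(t-1)r>1$, so membership in $T(\phi,(t-1)r)$ fails even for the top-$r$ choice; and raising the flatness level to $\|\boldsymbol{\mathcal{H}}_{\Gamma}\|_{*}/r$ would make the sparse pieces' Frobenius norms scale with the head nuclear norm, contaminating the final bound with a term that \eqref{Lemma2-results} does not contain. The device that actually works --- and is what the paper does --- is to split $\Gamma^{c}$ into $\Gamma_{1}=\{i\in\Gamma^{c}:\boldsymbol{\mathcal{S}}_{\boldsymbol{\mathcal{H}}}(i,i,1)>\phi\}$ and $\Gamma_{2}$, note $|\Gamma_{1}|<(t-1)r$, absorb $\Gamma_{1}$ into the head, and prove the bound for $\|\boldsymbol{\mathcal{H}}_{\Gamma\cup\Gamma_{1}}\|_{F}\geq\|\boldsymbol{\mathcal{H}}_{\Gamma}\|_{F}$; your ``arithmetic--geometric mean split'' is not this, and no variant of it that keeps the right-hand side in terms of $\|\boldsymbol{\mathcal{H}}_{\Gamma^{c}}\|_{*}$ alone is apparent.

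Second, and more decisively, your cross-term mechanism cannot produce the stated constants. Polarization for orthogonal pieces of combined tubal rank at most $tr$ gives $|\langle\boldsymbol{\mathfrak{M}}(\boldsymbol{\mathcal{A}}),\boldsymbol{\mathfrak{M}}(\boldsymbol{\mathcal{B}})\rangle|\leq\delta_{tr}^{\boldsymbol{\mathfrak{M}}}\|\boldsymbol{\mathcal{A}}\|_{F}\|\boldsymbol{\mathcal{B}}\|_{F}$, and after dividing $(1-\delta_{tr}^{\boldsymbol{\mathfrak{M}}})\|\boldsymbol{\mathcal{H}}_{\Gamma}\|_{F}^{2}\leq\cdots$ through by $(1-\delta_{tr}^{\boldsymbol{\mathfrak{M}}})\|\boldsymbol{\mathcal{H}}_{\Gamma}\|_{F}$ your argument yields at best $\eta_{2}'=\sqrt{n_{3}}\,\delta_{tr}^{\boldsymbol{\mathfrak{M}}}\big/\bigl((1-\delta_{tr}^{\boldsymbol{\mathfrak{M}}})\sqrt{t-1}\bigr)$, which exceeds the lemma's $\eta_{2}$ by the factor $\sqrt{(1+\delta_{tr}^{\boldsymbol{\mathfrak{M}}})/(1-\delta_{tr}^{\boldsymbol{\mathfrak{M}}})}>1$ (your $\eta_{1}$ would incidentally be smaller, but that does not help: an inequality with a larger $\eta_{2}$ does not imply \eqref{Lemma2-results}). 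You correctly sense that recovering $\sqrt{1-(\delta_{tr}^{\boldsymbol{\mathfrak{M}}})^{2}}$ is the issue, but ``keeping the upper and lower contributions separate'' supplies no mechanism for it. In the paper that factor arises from a different use of t-RIP altogether: one forms the weighted test tensors $\boldsymbol{\mathcal{B}}_{i}=(1+\delta_{tr}^{\boldsymbol{\mathfrak{M}}})\boldsymbol{\mathcal{H}}_{\Gamma\cup\Gamma_{1}}+\delta_{tr}^{\boldsymbol{\mathfrak{M}}}\boldsymbol{\mathcal{G}}_{i}$ and $\boldsymbol{\mathcal{P}}_{i}=(1-\delta_{tr}^{\boldsymbol{\mathfrak{M}}})\boldsymbol{\mathcal{H}}_{\Gamma\cup\Gamma_{1}}-\delta_{tr}^{\boldsymbol{\mathfrak{M}}}\boldsymbol{\mathcal{G}}_{i}$, sandwiches $\xi=\sum_{i}\gamma_{i}\bigl(\|\boldsymbol{\mathfrak{M}}(\boldsymbol{\mathcal{B}}_{i})\|_{2}^{2}-\|\boldsymbol{\mathfrak{M}}(\boldsymbol{\mathcal{P}}_{i})\|_{2}^{2}\bigr)$ between the telescoped upper bound $4\delta_{tr}^{\boldsymbol{\mathfrak{M}}}\sqrt{1+\delta_{tr}^{\boldsymbol{\mathfrak{M}}}}\|\boldsymbol{\mathcal{H}}_{\Gamma\cup\Gamma_{1}}\|_{F}\|\boldsymbol{\mathfrak{M}}(\boldsymbol{\mathcal{H}})\|_{2}$ and a t-RIP lower bound that is quadratic in $\|\boldsymbol{\mathcal{H}}_{\Gamma\cup\Gamma_{1}}\|_{F}$, and then solves that quadratic inequality --- the $\sqrt{1-(\delta_{tr}^{\boldsymbol{\mathfrak{M}}})^{2}}$ comes out of the discriminant. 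Since it is precisely the stated $\eta_{2}$ (via $1-\eta_{2}>0$) that feeds the threshold \eqref{Th1_condition} of Theorem~\ref{Theorem-1}, a proof delivering only $\eta_{2}'$ establishes a strictly weaker lemma, with a correspondingly worse recovery condition, not Lemma~\ref{Lemma-2}.
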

\begin{proof}\quad
Please see \ref{Appendix A}.
\end{proof}

In order to prove the main result, we still need the following lemma.
\begin{lemma}\quad\label{Lemma-3}
If the noisy measurements $\boldsymbol{y}=\boldsymbol{\mathfrak{M}}(\boldsymbol{\mathcal{X}})+\boldsymbol{w}$ of tensor $\boldsymbol{\mathcal{X}}\in\mathbb{R}^{n_{1}\times n_{2}\times n_{3}}$ are observed with noise level $\|\boldsymbol{w}\|_{2}\leq\epsilon$, then for any subset $\Gamma\subset[\kappa]$ with $|\Gamma|=r$ and $\kappa=\min(n_{1},n_{2})$, the minimization solution
$\boldsymbol{\hat{\mathcal{X}}}$ of (\ref{tensor nuclear norm min unconstrained}) satisfies
\begin{equation}
\|\boldsymbol{\mathfrak{M}}(\boldsymbol{\mathcal{H}})\|_{2}^{2}-2\epsilon\|\boldsymbol{\mathfrak{M}}(\boldsymbol{\mathcal{H}})\|_{2}
\leq2\lambda(\|\boldsymbol{\mathcal{H}}_{\Gamma}\|_{*}-\|\boldsymbol{\mathcal{H}}_{\Gamma^{c}}\|_{*}+2\|\boldsymbol{\mathcal{X}}_{\Gamma^{c}}\|_{*})
,\label{Lemma3-results1}
\end{equation}
and
\begin{equation}\label{Lemma3-results2}
\|\boldsymbol{\mathcal{H}}_{\Gamma^{c}}\|_{*}\leq\|\boldsymbol{\mathcal{H}}_{\Gamma}\|_{*}+2\|\boldsymbol{\mathcal{X}}_{\Gamma^{c}}\|_{*}+\frac{\epsilon}{\lambda}\|\boldsymbol{\mathfrak{M}}(\boldsymbol{\mathcal{H}})\|_{2},
\end{equation}
where $\boldsymbol{\mathcal{H}}\triangleq \boldsymbol{\hat{\mathcal{X}}}-\boldsymbol{\mathcal{X}}$.
\end{lemma}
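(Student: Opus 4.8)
The plan is to exploit the variational optimality of $\hat{\boldsymbol{\mathcal{X}}}$ as a minimizer of (\ref{tensor nuclear norm min unconstrained}) and then convert the resulting single inequality into the two claimed forms using, respectively, a Cauchy--Schwarz bound on the noise term and a decomposition of the tensor nuclear norm along the index set $\Gamma$. First I would compare $\hat{\boldsymbol{\mathcal{X}}}$ with the true competitor $\boldsymbol{\mathcal{X}}$: since $\hat{\boldsymbol{\mathcal{X}}}$ minimizes the objective,
\begin{equation*}
\|\hat{\boldsymbol{\mathcal{X}}}\|_* + \frac{1}{2\lambda}\|\boldsymbol{y}-\boldsymbol{\mathfrak{M}}(\hat{\boldsymbol{\mathcal{X}}})\|_2^2 \le \|\boldsymbol{\mathcal{X}}\|_* + \frac{1}{2\lambda}\|\boldsymbol{y}-\boldsymbol{\mathfrak{M}}(\boldsymbol{\mathcal{X}})\|_2^2.
\end{equation*}
Using $\boldsymbol{y}-\boldsymbol{\mathfrak{M}}(\boldsymbol{\mathcal{X}})=\boldsymbol{w}$ and $\boldsymbol{y}-\boldsymbol{\mathfrak{M}}(\hat{\boldsymbol{\mathcal{X}}})=\boldsymbol{w}-\boldsymbol{\mathfrak{M}}(\boldsymbol{\mathcal{H}})$ (recall $\boldsymbol{\mathcal{H}}=\hat{\boldsymbol{\mathcal{X}}}-\boldsymbol{\mathcal{X}}$), I would expand the square, cancel the common $\frac{1}{2\lambda}\|\boldsymbol{w}\|_2^2$, and rearrange to the master inequality
\begin{equation*}
\|\boldsymbol{\mathfrak{M}}(\boldsymbol{\mathcal{H}})\|_2^2 - 2\langle\boldsymbol{w},\boldsymbol{\mathfrak{M}}(\boldsymbol{\mathcal{H}})\rangle \le 2\lambda\bigl(\|\boldsymbol{\mathcal{X}}\|_*-\|\hat{\boldsymbol{\mathcal{X}}}\|_*\bigr),
\end{equation*}
which is the common source of both claims.

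The noise is then controlled by Cauchy--Schwarz together with $\|\boldsymbol{w}\|_2\le\epsilon$, namely $|\langle\boldsymbol{w},\boldsymbol{\mathfrak{M}}(\boldsymbol{\mathcal{H}})\rangle|\le\epsilon\|\boldsymbol{\mathfrak{M}}(\boldsymbol{\mathcal{H}})\|_2$. For (\ref{Lemma3-results1}) I would bound the cross term from below, $-2\langle\boldsymbol{w},\boldsymbol{\mathfrak{M}}(\boldsymbol{\mathcal{H}})\rangle\ge -2\epsilon\|\boldsymbol{\mathfrak{M}}(\boldsymbol{\mathcal{H}})\|_2$, which places $\|\boldsymbol{\mathfrak{M}}(\boldsymbol{\mathcal{H}})\|_2^2-2\epsilon\|\boldsymbol{\mathfrak{M}}(\boldsymbol{\mathcal{H}})\|_2$ on the left against $2\lambda(\|\boldsymbol{\mathcal{X}}\|_*-\|\hat{\boldsymbol{\mathcal{X}}}\|_*)$. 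For (\ref{Lemma3-results2}) I would instead discard the nonnegative term $\|\boldsymbol{\mathfrak{M}}(\boldsymbol{\mathcal{H}})\|_2^2$ and bound the cross term from above, producing $\|\hat{\boldsymbol{\mathcal{X}}}\|_*\le\|\boldsymbol{\mathcal{X}}\|_*+\tfrac{\epsilon}{\lambda}\|\boldsymbol{\mathfrak{M}}(\boldsymbol{\mathcal{H}})\|_2$.

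It remains to turn the nuclear-norm gap into the quantities on the right-hand sides. Writing $\hat{\boldsymbol{\mathcal{X}}}=(\boldsymbol{\mathcal{X}}_\Gamma+\boldsymbol{\mathcal{H}}_{\Gamma^c})+(\boldsymbol{\mathcal{X}}_{\Gamma^c}+\boldsymbol{\mathcal{H}}_\Gamma)$, I would apply the reverse triangle inequality once, then the additivity of the tensor nuclear norm for the group $\boldsymbol{\mathcal{X}}_\Gamma+\boldsymbol{\mathcal{H}}_{\Gamma^c}$ (whose pieces have mutually orthogonal column and row spaces) and the ordinary triangle inequality for $\boldsymbol{\mathcal{X}}_{\Gamma^c}+\boldsymbol{\mathcal{H}}_\Gamma$, to obtain
\begin{equation*}
\|\hat{\boldsymbol{\mathcal{X}}}\|_* \ge \|\boldsymbol{\mathcal{X}}_\Gamma\|_* + \|\boldsymbol{\mathcal{H}}_{\Gamma^c}\|_* - \|\boldsymbol{\mathcal{X}}_{\Gamma^c}\|_* - \|\boldsymbol{\mathcal{H}}_\Gamma\|_*.
\end{equation*}
Combining this with $\|\boldsymbol{\mathcal{X}}\|_*\le\|\boldsymbol{\mathcal{X}}_\Gamma\|_*+\|\boldsymbol{\mathcal{X}}_{\Gamma^c}\|_*$ and cancelling $\|\boldsymbol{\mathcal{X}}_\Gamma\|_*$ gives $\|\boldsymbol{\mathcal{X}}\|_*-\|\hat{\boldsymbol{\mathcal{X}}}\|_*\le\|\boldsymbol{\mathcal{H}}_\Gamma\|_*-\|\boldsymbol{\mathcal{H}}_{\Gamma^c}\|_*+2\|\boldsymbol{\mathcal{X}}_{\Gamma^c}\|_*$; the same cancellation applied to the (\ref{Lemma3-results2}) branch yields $\|\boldsymbol{\mathcal{H}}_{\Gamma^c}\|_*\le\|\boldsymbol{\mathcal{H}}_\Gamma\|_*+2\|\boldsymbol{\mathcal{X}}_{\Gamma^c}\|_*+\tfrac{\epsilon}{\lambda}\|\boldsymbol{\mathfrak{M}}(\boldsymbol{\mathcal{H}})\|_2$. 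Substituting back into the two noise-controlled inequalities completes both claims.

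The step I expect to be the main obstacle is precisely the additivity used in the decomposition: it requires that $\boldsymbol{\mathcal{X}}_\Gamma$ and $\boldsymbol{\mathcal{H}}_{\Gamma^c}$ have orthogonal column and row spaces under the t-product, which forces $\boldsymbol{\mathcal{H}}_\Gamma,\boldsymbol{\mathcal{H}}_{\Gamma^c}$ to be understood as the projections of $\boldsymbol{\mathcal{H}}$ onto the subspace spanned by the $\Gamma$-columns of $\boldsymbol{\mathcal{U}}_{\boldsymbol{\mathcal{X}}},\boldsymbol{\mathcal{V}}_{\boldsymbol{\mathcal{X}}}$ and its complement. I would justify this additivity by passing to the block-diagonal DFT representation and invoking property (\ref{Property 2}), $\|\boldsymbol{\mathcal{X}}\|_*=\tfrac{1}{n_3}\|\boldsymbol{\bar{X}}\|_*$, which reduces the tensor statement to the familiar matrix nuclear-norm additivity for block matrices with orthogonal supports; everything else is bookkeeping with the triangle inequality and Cauchy--Schwarz.
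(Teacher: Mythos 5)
Your opening steps coincide exactly with the paper's proof: the optimality comparison giving the master inequality $\|\boldsymbol{\mathfrak{M}}(\boldsymbol{\mathcal{H}})\|_{2}^{2}-2\langle\boldsymbol{w},\boldsymbol{\mathfrak{M}}(\boldsymbol{\mathcal{H}})\rangle\leq2\lambda(\|\boldsymbol{\mathcal{X}}\|_{*}-\|\boldsymbol{\hat{\mathcal{X}}}\|_{*})$, the Cauchy--Schwarz control of the noise term, and the observation that (\ref{Lemma3-results2}) follows from (\ref{Lemma3-results1}) by discarding the nonnegative square term. The gap is in how you convert $\|\boldsymbol{\mathcal{X}}\|_{*}-\|\boldsymbol{\hat{\mathcal{X}}}\|_{*}$ into $\|\boldsymbol{\mathcal{H}}_{\Gamma}\|_{*}-\|\boldsymbol{\mathcal{H}}_{\Gamma^{c}}\|_{*}+2\|\boldsymbol{\mathcal{X}}_{\Gamma^{c}}\|_{*}$. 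In this paper the subscript notation is fixed by the Remark in Section \ref{Notations and Preliminaries}: $\boldsymbol{\mathcal{H}}_{\Gamma}$ and $\boldsymbol{\mathcal{H}}_{\Gamma^{c}}$ are restrictions of the t-SVD of $\boldsymbol{\mathcal{H}}$ \emph{itself}, i.e.\ $\boldsymbol{\mathcal{H}}_{\Gamma}=\sum_{i\in\Gamma}\boldsymbol{\mathcal{U}}_{\boldsymbol{\mathcal{H}}}(:,i,:)\ast\boldsymbol{\mathcal{S}}_{\boldsymbol{\mathcal{H}}}(i,i,:)\ast\boldsymbol{\mathcal{V}}_{\boldsymbol{\mathcal{H}}}(:,i,:)^{*}$. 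The singular spaces of $\boldsymbol{\mathcal{H}}$ bear no relation to those of $\boldsymbol{\mathcal{X}}$, so the orthogonality you need for the additivity $\|\boldsymbol{\mathcal{X}}_{\Gamma}+\boldsymbol{\mathcal{H}}_{\Gamma^{c}}\|_{*}=\|\boldsymbol{\mathcal{X}}_{\Gamma}\|_{*}+\|\boldsymbol{\mathcal{H}}_{\Gamma^{c}}\|_{*}$ fails for these objects, and your chain does not prove (\ref{Lemma3-results1})--(\ref{Lemma3-results2}) as stated. You saw this obstacle yourself, but your fix --- redefining $\boldsymbol{\mathcal{H}}_{\Gamma},\boldsymbol{\mathcal{H}}_{\Gamma^{c}}$ as projections of $\boldsymbol{\mathcal{H}}$ onto the $\Gamma$-part of $\boldsymbol{\mathcal{U}}_{\boldsymbol{\mathcal{X}}},\boldsymbol{\mathcal{V}}_{\boldsymbol{\mathcal{X}}}$ and its complement (the Cand\`{e}s--Plan device from matrix recovery) --- proves a genuinely different statement, not the lemma. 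The difference is not cosmetic: with your definition $\boldsymbol{\mathcal{H}}_{\Gamma}$ has tubal rank up to $2r$ rather than $r$ and is no longer a spectral truncation of $\boldsymbol{\mathcal{H}}$, whereas Lemma \ref{Lemma-3} is consumed in Theorem \ref{Theorem-1} \emph{jointly} with Lemma \ref{Lemma-2}, whose proof in \ref{Appendix A} splits the singular tubes of $\boldsymbol{\mathcal{H}}$ itself, and Theorem \ref{Theorem-1} also uses that $\boldsymbol{\mathcal{H}}_{\Gamma}$ has tubal rank at most $r$ (e.g.\ in $\|\boldsymbol{\mathcal{H}}_{\Gamma}\|_{*}\leq\sqrt{r}\|\boldsymbol{\mathcal{H}}_{\Gamma}\|_{F}$). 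Your redefined pieces cannot be substituted into those steps without changing constants and breaking the compatibility between the two lemmas.

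For contrast, the paper's own proof never invokes orthogonality: it uses the exact index-additivity of the tensor nuclear norm over the t-SVD of $\boldsymbol{\hat{\mathcal{X}}}$ itself, $\|\boldsymbol{\hat{\mathcal{X}}}\|_{*}=\|(\boldsymbol{\mathcal{H}}+\boldsymbol{\mathcal{X}})_{\Gamma}\|_{*}+\|(\boldsymbol{\mathcal{H}}+\boldsymbol{\mathcal{X}})_{\Gamma^{c}}\|_{*}$, followed by the reverse-triangle-type bounds $\|(\boldsymbol{\mathcal{H}}+\boldsymbol{\mathcal{X}})_{\Gamma}\|_{*}\geq\|\boldsymbol{\mathcal{X}}_{\Gamma}\|_{*}-\|\boldsymbol{\mathcal{H}}_{\Gamma}\|_{*}$ and $\|(\boldsymbol{\mathcal{H}}+\boldsymbol{\mathcal{X}})_{\Gamma^{c}}\|_{*}\geq\|\boldsymbol{\mathcal{H}}_{\Gamma^{c}}\|_{*}-\|\boldsymbol{\mathcal{X}}_{\Gamma^{c}}\|_{*}$, which keep every quantity defined relative to its own tensor. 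Your instinct that this is the delicate point of the whole lemma is sound: since restriction is not a linear operation, the paper's two bounds are not literal triangle inequalities either (they are Ky Fan-type estimates that implicitly require $\Gamma$ to consist of leading indices), but that route is what leaves $\|\boldsymbol{\mathcal{H}}_{\Gamma}\|_{*}$ and $\|\boldsymbol{\mathcal{H}}_{\Gamma^{c}}\|_{*}$ meaning what Lemma \ref{Lemma-2} and Theorem \ref{Theorem-1} need them to mean. To salvage your projection argument you would have to restate the lemma and the downstream results in terms of the projected pieces; as a proof of the statement as written, the additivity step fails.
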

\begin{proof}\quad
Since $\boldsymbol{\hat{\mathcal{X}}}$ is the minimizer of (\ref{tensor nuclear norm min unconstrained}), we have
\begin{equation*}
\|\boldsymbol{\hat{\mathcal{X}}}\|_{*}+\frac{1}{2\lambda}\|\boldsymbol{y}-\boldsymbol{\mathfrak{M}}(\boldsymbol{\hat{\mathcal{X}}})\|_{2}^{2}\leq\|\boldsymbol{\mathcal{X}}\|_{*}+\frac{1}{2\lambda}\|\boldsymbol{y}-\boldsymbol{\mathfrak{M}}(\boldsymbol{\mathcal{X}})\|_{2}^{2}.
\end{equation*}
Also because $\boldsymbol{\hat{\mathcal{X}}}=\boldsymbol{\mathcal{H}}+\boldsymbol{\mathcal{X}}$ and $\boldsymbol{y}=\boldsymbol{\mathfrak{M}}(\boldsymbol{\mathcal{X}})+\boldsymbol{w}$, so the above inequality is equivalent to
\begin{equation*}
\|\boldsymbol{\mathfrak{M}}(\boldsymbol{\mathcal{H}})\|_{2}^{2}-2\langle \boldsymbol{w}, \boldsymbol{\mathfrak{M}}(\boldsymbol{\mathcal{H}})\rangle\leq 2\lambda(\|\boldsymbol{\mathcal{X}}\|_{*}-\|\boldsymbol{\hat{\mathcal{X}}}\|_{*}).
\end{equation*}
It follows from the Cauchy-Schwartz inequality and assumption $\|\boldsymbol{w}\|_{2}\leq \epsilon$ that
\begin{equation}\label{Estimate-L}
\|\boldsymbol{\mathfrak{M}}(\boldsymbol{\mathcal{H}})\|_{2}^{2}-2\langle \boldsymbol{w}, \boldsymbol{\mathfrak{M}}(\boldsymbol{\mathcal{H}})\rangle\geq\|\boldsymbol{\mathfrak{M}}(\boldsymbol{\mathcal{H}})\|_{2}^{2}-2\epsilon\|\boldsymbol{\mathfrak{M}}(\boldsymbol{\mathcal{H}})\|_{2}.
\end{equation}
On the other hand, we have
\begin{eqnarray}\label{Estimate-R}
 \|\boldsymbol{\hat{\mathcal{X}}}\|_{*}-\|\boldsymbol{\mathcal{X}}\|_{*}&=&\|(\boldsymbol{\mathcal{H}}+\boldsymbol{\mathcal{X}})_{\Gamma}\|_{*}+\|(\boldsymbol{\mathcal{H}}+\boldsymbol{\mathcal{X}})_{\Gamma^{c}}\|_{*}-(\|\boldsymbol{\mathcal{X}}_{\Gamma}\|_{*}+\|\boldsymbol{\mathcal{X}}_{\Gamma^{c}}\|_{*})\nonumber\\
&\geq&(\|\boldsymbol{\mathcal{X}}_{\Gamma}\|_{*}-\|\boldsymbol{\mathcal{H}}_{\Gamma}\|_{*})+(\|\boldsymbol{\mathcal{H}}_{\Gamma^{c}}|_{*}-\|\boldsymbol{\mathcal{X}}_{\Gamma^{c}}\|_{*})-(\|\boldsymbol{\mathcal{X}}_{\Gamma}\|_{*}+\|\boldsymbol{\mathcal{X}}_{\Gamma^{c}}\|_{*})\nonumber\\
&\geq&\|\boldsymbol{\mathcal{H}}_{\Gamma^{c}}|_{*}-\|\boldsymbol{\mathcal{H}}_{\Gamma}|_{*}-2\|\boldsymbol{\mathcal{X}}_{\Gamma^{c}}|_{*}.
\end{eqnarray}
Combining (\ref{Estimate-L}) and (\ref{Estimate-R}) and by a simple calculation, we get (\ref{Lemma3-results1}). As to (\ref{Lemma3-results2}), it is obtained by subtracting the term $\|\boldsymbol{\mathfrak{M}}(\boldsymbol{\mathcal{H}})\|_{2}^{2}$ from the left-hand side of (\ref{Lemma3-results1}).
\end{proof}

\section{Main Results}
\label{Main Result}
With preparations above, now we present our main result.
\begin{theorem}\quad\label{Theorem-1}
For any observed vector $\boldsymbol{y}=\boldsymbol{\mathfrak{M}}(\boldsymbol{\mathcal{X}})+\boldsymbol{w}$ of tensor $\boldsymbol{\mathcal{X}}\in\mathbb{R}^{n_{1}\times n_{2}\times n_{3}}$ corrupted by an unknown noise $\boldsymbol{w}$, with bounded constrain $\|\boldsymbol{w}\|_{2}\leq\epsilon$, if $\boldsymbol{\mathfrak{M}}$ satisfies t-RIP with
\begin{equation}\label{Th1_condition}
\delta_{tr}^{\boldsymbol{\mathfrak{M}}}<\sqrt{\frac{t-1}{n_{3}^{2}+t-1}}
\end{equation}
for certain $t>1$, then we have
\begin{equation}
\|\boldsymbol{\mathfrak{M}}(\boldsymbol{\hat{\mathcal{X}}}-\boldsymbol{\mathcal{X}})\|_{2}\leq C_{1}\|\boldsymbol{\mathcal{X}}_{-\max(r)}\|_{*}+ C_{2},\label{Theorem-1-results1}
\end{equation}
and
\begin{equation}\label{Theorem-1-results2}
\|\boldsymbol{\hat{\mathcal{X}}}-\boldsymbol{\mathcal{X}}\|_{F}\leq C_{3}\|\boldsymbol{\mathcal{X}}_{-\max(r)}\|_{*}+ C_{4},
\end{equation}
where $\boldsymbol{\hat{\mathcal{X}}}$ is the solution to (\ref{tensor nuclear norm min unconstrained}), and $C_{i}, i=1,2,3,4$ are denoted as
\begin{eqnarray*}
C_{1}&=&\frac{2}{\sqrt{r}\eta_{1}},\quad C_{2}=2\sqrt{r}\eta_{1}\lambda+2\epsilon,\\
C_{3}&=&\frac{2\sqrt{r}\eta_{1}(2\sqrt{n_{3}r}+1+\eta_{2})\lambda+2(\sqrt{n_{3}r}+\eta_{2})\epsilon}{r\eta_{1}(1-\eta_{2})\lambda},\\
C_{4}&=&\frac{(\sqrt{n_{3}r}+1)\eta_{1}\lambda+(\sqrt{n_{3}r}-\sqrt{n_{3}}\eta_{2}+\sqrt{n_{3}}+1)\epsilon}{(1-\eta_{2})\lambda(2\sqrt{r}\eta_{1}\lambda+2\epsilon)^{-1}}.
\end{eqnarray*}
\end{theorem}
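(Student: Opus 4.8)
The plan is to choose $\Gamma\subset[\kappa]$ as the index set of the $r$ largest singular tubes of $\boldsymbol{\mathcal{X}}$, so that $\boldsymbol{\mathcal{X}}_{\Gamma^{c}}=\boldsymbol{\mathcal{X}}_{-\max(r)}$ and $\|\boldsymbol{\mathcal{X}}_{\Gamma^{c}}\|_{*}=\|\boldsymbol{\mathcal{X}}_{-\max(r)}\|_{*}$, and to abbreviate $\boldsymbol{\mathcal{H}}=\boldsymbol{\hat{\mathcal{X}}}-\boldsymbol{\mathcal{X}}$ and $m=\|\boldsymbol{\mathfrak{M}}(\boldsymbol{\mathcal{H}})\|_{2}$. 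Both estimates will be obtained by feeding the robust null-space bound of Lemma \ref{Lemma-2} into the two optimality inequalities (\ref{Lemma3-results1}) and (\ref{Lemma3-results2}) of Lemma \ref{Lemma-3}. The one tensor-specific ingredient I need is a conversion of the nuclear norm of the tubal-rank-$r$ head $\boldsymbol{\mathcal{H}}_{\Gamma}$ into its Frobenius norm: applying Cauchy--Schwarz to the singular values of $\boldsymbol{\bar{H}}_{\Gamma}$ together with Properties (\ref{Property 1})--(\ref{Property 3}) gives $\|\boldsymbol{\mathcal{H}}_{\Gamma}\|_{*}\le\sqrt{n_{3}r}\,\|\boldsymbol{\mathcal{H}}_{\Gamma}\|_{F}$, and substituting this into (\ref{Lemma2-results}) yields
\begin{equation*}
\|\boldsymbol{\mathcal{H}}_{\Gamma}\|_{*}\le\sqrt{n_{3}r}\,\eta_{1}\,m+\sqrt{n_{3}}\,\eta_{2}\,\|\boldsymbol{\mathcal{H}}_{\Gamma^{c}}\|_{*}.
\end{equation*}

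The hypothesis (\ref{Th1_condition}) enters exactly here: a short computation shows that $\delta_{tr}^{\boldsymbol{\mathfrak{M}}}<\sqrt{(t-1)/(n_{3}^{2}+t-1)}$ is equivalent to $\sqrt{n_{3}}\,\eta_{2}<1$, so the coefficient of the tail term $\|\boldsymbol{\mathcal{H}}_{\Gamma^{c}}\|_{*}$ in the display above is strictly below one and can later be absorbed to the left. I expect this $n_{3}$-bookkeeping to be the main obstacle: every passage between a tensor norm and its block-diagonal Fourier surrogate $\boldsymbol{\bar{H}}$ introduces powers of $n_{3}$ through Properties (\ref{Property 1})--(\ref{Property 3}), and it is precisely the accumulated $n_{3}$ factor that must fall below one under (\ref{Th1_condition}); tracking it incorrectly would either weaken the recovery threshold or break the absorption step.

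For the first estimate (\ref{Theorem-1-results1}) I would start from (\ref{Lemma3-results1}), replace $\|\boldsymbol{\mathcal{H}}_{\Gamma}\|_{*}$ by the bound just derived, and discard the resulting nonpositive multiple of $\|\boldsymbol{\mathcal{H}}_{\Gamma^{c}}\|_{*}$, which is legitimate once its coefficient $1-\sqrt{n_{3}}\,\eta_{2}$ is positive. This collapses the inequality into a scalar quadratic in $m$ of the shape $m^{2}-am\le b\,\|\boldsymbol{\mathcal{X}}_{-\max(r)}\|_{*}$ with explicit $a,b$. Resolving it by the usual dichotomy---if $m$ does not exceed the linear coefficient $a$ the claim is immediate, and otherwise one divides by $m$ and uses $m>a$ to control the remainder---delivers the bound $m\le C_{1}\|\boldsymbol{\mathcal{X}}_{-\max(r)}\|_{*}+C_{2}$.

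For the second estimate (\ref{Theorem-1-results2}) I would split $\|\boldsymbol{\mathcal{H}}\|_{F}\le\|\boldsymbol{\mathcal{H}}_{\Gamma}\|_{F}+\|\boldsymbol{\mathcal{H}}_{\Gamma^{c}}\|_{F}$ and bound the tail crudely by $\|\boldsymbol{\mathcal{H}}_{\Gamma^{c}}\|_{F}\le\|\boldsymbol{\mathcal{H}}_{\Gamma^{c}}\|_{*}$. The cone inequality (\ref{Lemma3-results2}) controls $\|\boldsymbol{\mathcal{H}}_{\Gamma^{c}}\|_{*}$ in terms of $\|\boldsymbol{\mathcal{H}}_{\Gamma}\|_{*}$, $\|\boldsymbol{\mathcal{X}}_{-\max(r)}\|_{*}$ and $m$; inserting the head bound for $\|\boldsymbol{\mathcal{H}}_{\Gamma}\|_{*}$, moving the $\|\boldsymbol{\mathcal{H}}_{\Gamma^{c}}\|_{*}$ contribution to the left and dividing by the leftover positive factor gives an estimate of the form $\|\boldsymbol{\mathcal{H}}_{\Gamma^{c}}\|_{*}\le(\mathrm{const})\,(m+\|\boldsymbol{\mathcal{X}}_{-\max(r)}\|_{*})$. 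Finally I would bound $\|\boldsymbol{\mathcal{H}}_{\Gamma}\|_{F}$ by Lemma \ref{Lemma-2} once more, substitute the bound on $m$ from the first part, and gather terms to reach $\|\boldsymbol{\mathcal{H}}\|_{F}\le C_{3}\|\boldsymbol{\mathcal{X}}_{-\max(r)}\|_{*}+C_{4}$. The remaining effort is routine constant-chasing.
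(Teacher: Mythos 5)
Your outline follows the paper's own architecture exactly --- the same choice of $\Gamma$, the same combination of Lemma \ref{Lemma-2} with the two inequalities of Lemma \ref{Lemma-3}, the same quadratic-in-$m$ resolution, and the same head-plus-tail split of $\|\boldsymbol{\mathcal{H}}\|_{F}$ --- but both tensor-specific norm conversions you introduce mis-track the $n_{3}$ factors, which is precisely the pitfall you flagged, and neither slip is harmless. First, the head conversion: from Properties (\ref{Property 1})--(\ref{Property 3}) the tight chain is
\begin{equation*}
\|\boldsymbol{\mathcal{H}}_{\Gamma}\|_{*}=\frac{1}{n_{3}}\|\bdiag(\boldsymbol{\bar{\mathcal{H}}}_{\Gamma})\|_{*}
\leq\frac{\sqrt{n_{3}r}}{n_{3}}\|\bdiag(\boldsymbol{\bar{\mathcal{H}}}_{\Gamma})\|_{F}
=\frac{\sqrt{n_{3}r}}{n_{3}}\cdot\sqrt{n_{3}}\,\|\boldsymbol{\mathcal{H}}_{\Gamma}\|_{F}
=\sqrt{r}\,\|\boldsymbol{\mathcal{H}}_{\Gamma}\|_{F},
\end{equation*}
not $\sqrt{n_{3}r}\,\|\boldsymbol{\mathcal{H}}_{\Gamma}\|_{F}$: the block-diagonal rank does cost $n_{3}r$, but the mismatched normalizations $1/n_{3}$ (nuclear) and $1/\sqrt{n_{3}}$ (Frobenius) cancel that factor. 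Your version is true but inflated by $\sqrt{n_{3}}$, and the inflation survives to the end: your quadratic becomes $m^{2}-2(\sqrt{n_{3}r}\,\eta_{1}\lambda+\epsilon)m-4\lambda\|\boldsymbol{\mathcal{X}}_{-\max(r)}\|_{*}\leq0$, whose resolution yields the additive constant $2\sqrt{n_{3}r}\,\eta_{1}\lambda+2\epsilon>C_{2}$, so you do not obtain (\ref{Theorem-1-results1}) with the stated $C_{1},C_{2}$. A related artifact: with the tight $\sqrt{r}$ conversion the absorption step only requires $\eta_{2}<1$, so your claim that (\ref{Th1_condition}) enters ``exactly'' as $\sqrt{n_{3}}\,\eta_{2}<1$ is a by-product of the lossy bound; in the paper's proof the hypothesis delivers $\eta_{2}<1/\sqrt{n_{3}}$, which is more than the argument actually consumes.

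The second slip invalidates your derivation of (\ref{Theorem-1-results2}) as written: the tail bound $\|\boldsymbol{\mathcal{H}}_{\Gamma^{c}}\|_{F}\leq\|\boldsymbol{\mathcal{H}}_{\Gamma^{c}}\|_{*}$ is false under the t-SVD normalization. The matrix fact $\|\cdot\|_{F}\leq\|\cdot\|_{*}$ does not transfer, because Properties (\ref{Property 1})--(\ref{Property 2}) only give
\begin{equation*}
\|\boldsymbol{\mathcal{A}}\|_{F}=\frac{1}{\sqrt{n_{3}}}\|\bdiag(\boldsymbol{\bar{\mathcal{A}}})\|_{F}
\leq\frac{1}{\sqrt{n_{3}}}\|\bdiag(\boldsymbol{\bar{\mathcal{A}}})\|_{*}
=\sqrt{n_{3}}\,\|\boldsymbol{\mathcal{A}}\|_{*},
\end{equation*}
and the factor $\sqrt{n_{3}}$ cannot be removed: for the $1\times1\times2$ tensor $\boldsymbol{\mathcal{A}}$ whose single tube is $(1,1)/\sqrt{2}$, the Fourier blocks are $\sqrt{2}$ and $0$, so $\|\boldsymbol{\mathcal{A}}\|_{F}=1$ while $\|\boldsymbol{\mathcal{A}}\|_{*}=1/\sqrt{2}$. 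This is exactly why the paper bounds $\|\boldsymbol{\mathcal{H}}_{\Gamma^{c}}\|_{F}\leq\sqrt{n_{3}}\|\boldsymbol{\mathcal{H}}_{\Gamma^{c}}\|_{*}$, and it is where the $\sqrt{n_{3}}$ terms inside $C_{3}$ and $C_{4}$ originate. Both defects are repairable --- replace $\sqrt{n_{3}r}$ by $\sqrt{r}$ in the head conversion and insert the $\sqrt{n_{3}}$ in the tail --- and after those repairs your outline coincides with the paper's proof; but as proposed it proves weaker constants than the theorem asserts and leans on an inequality that fails for the tensor nuclear norm.
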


\begin{proof}\quad
For convenience, let
\begin{equation*}
T=\supp(\boldsymbol{s}_{\boldsymbol{\mathcal{X}}_{\max(r)}})
\end{equation*}
be an index set with cardinality $|T|\leq r$. In addition, if we set $\boldsymbol{\mathcal{H}}=\boldsymbol{\hat{\mathcal{X}}}-\boldsymbol{\mathcal{X}}$ and $\rank(\bdiag(\boldsymbol{\bar{\mathcal{H}}}_{\Gamma}))=\bar{r}$, then by inequality (\ref{Lemma2-results}) and (\ref{Lemma3-results1}), we would get
\begin{eqnarray*}
\|\boldsymbol{\mathfrak{M}}(\boldsymbol{\mathcal{H}})\|_{2}^{2}-2\epsilon\|\boldsymbol{\mathfrak{M}}(\boldsymbol{\mathcal{H}})\|_{2}&\leq&2\lambda(\|\boldsymbol{\mathcal{H}}_{\Gamma}\|_{*}-\|\boldsymbol{\mathcal{H}}_{\Gamma^{c}}\|_{*}+2\|\boldsymbol{\mathcal{X}}_{\Gamma^{c}}\|_{*})\\
&\stackrel{(a)}{=}&2\lambda\left(\frac{1}{n_{3}}\|\bdiag(\boldsymbol{\bar{\mathcal{H}}}_{\Gamma})\|_{*}-\|\boldsymbol{\mathcal{H}}_{\Gamma^{c}}\|_{*}+2\|\boldsymbol{\mathcal{X}}_{\Gamma^{c}}\|_{*}\right)\\
&\leq&2\lambda\left(\frac{\sqrt{\bar{r}}}{n_{3}}\|\bdiag(\boldsymbol{\bar{\mathcal{H}}}_{\Gamma})\|_{F}-\|\boldsymbol{\mathcal{H}}_{\Gamma^{c}}\|_{*}+2\|\boldsymbol{\mathcal{X}}_{\Gamma^{c}}\|_{*}\right)\\
&\stackrel{(b)}{\leq}&2\lambda\left(\sqrt{r}\|\boldsymbol{\mathcal{H}}_{\Gamma}\|_{F}-\|\boldsymbol{\mathcal{H}}_{\Gamma^{c}}\|_{*}+2\|\boldsymbol{\mathcal{X}}_{\Gamma^{c}}\|_{*}\right)\\
&\leq&2\sqrt{r}\lambda\left(\eta_{1}\|\boldsymbol{\mathfrak{M}}(\boldsymbol{\mathcal{H}})\|_{2}+\frac{\eta_{2}}{\sqrt{r}}\|\boldsymbol{\mathcal{H}}_{\Gamma^{c}}\|_{*}\right)-2\lambda\|\boldsymbol{\mathcal{H}}_{\Gamma^{c}}\|_{*}+4\lambda\|\boldsymbol{\mathcal{X}}_{\Gamma^{c}}\|_{*}\\
&=&2\sqrt{r}\eta_{1}\lambda\|\boldsymbol{\mathfrak{M}}(\boldsymbol{\mathcal{H}})\|_{2}-2(1-\eta_{2})\lambda\|\boldsymbol{\mathcal{H}}_{\Gamma^{c}}\|_{*}+4\lambda\|\boldsymbol{\mathcal{X}}_{\Gamma^{c}}\|_{*},
\end{eqnarray*}
where (a) follows from (\ref{Property 2}) and (b) is due to (\ref{Property 1}), (\ref{Property 3}). The assumption (\ref{Th1_condition}) implies that
\begin{equation*}
 1-\eta_{2}=1-\frac{\sqrt{n_{3}}\delta_{tr}^{\boldsymbol{\mathfrak{M}}}}{\sqrt{(1-(\delta_{tr}^{\boldsymbol{\mathfrak{M}}})^{2})(t-1)}}>1-\frac{\sqrt{n_{3}}\sqrt{(t-1)/(n_{3}^{2}+t-1)}}{\sqrt{\left(1-(t-1)/(n_{3}^{2}+t-1)\right)(t-1)}}=0,
\end{equation*}
and hence
\begin{equation*}
\|\boldsymbol{\mathfrak{M}}(\boldsymbol{\mathcal{H}})\|_{2}^{2}-2(\sqrt{r}\eta_{1}\lambda+\epsilon)\|\boldsymbol{\mathfrak{M}}(\boldsymbol{\mathcal{H}})\|_{2}-4\lambda\|\boldsymbol{\mathcal{X}}_{\Gamma^{c}}\|_{*}\leq0,
\end{equation*}
which implies that
\begin{eqnarray*}
\left(\|\boldsymbol{\mathfrak{M}}(\boldsymbol{\mathcal{H}})\|_{2}-(\sqrt{r}\eta_{1}\lambda+\epsilon)\right)^{2}&\leq&(\sqrt{r}\eta_{1}\lambda+\epsilon)^{2}+4\lambda\|\boldsymbol{\mathcal{X}}_{\Gamma^{c}}\|_{*}\\
&\leq&\left(\sqrt{r}\eta_{1}\lambda+\epsilon+\frac{2\lambda\|\boldsymbol{\mathcal{X}}_{\Gamma^{c}}\|_{*}}{\sqrt{r}\eta_{1}\lambda+\epsilon}\right)^{2}\\
&\leq&\left(\sqrt{r}\eta_{1}\lambda+\epsilon+\frac{2\|\boldsymbol{\mathcal{X}}_{\Gamma^{c}}\|_{*}}{\sqrt{r}\eta_{1}}\right)^{2}.
\end{eqnarray*}
Therefore, we conclude that (\ref{Theorem-1-results1}) holds. Plugging (\ref{Theorem-1-results1}) into (\ref{Lemma3-results2}), by $\|\boldsymbol{\mathcal{H}}_{\Gamma}\|_{*}\leq\sqrt{r}\|\boldsymbol{\mathcal{H}}_{\Gamma}\|_{F}$, we get
\begin{eqnarray}\label{The proof of Theorem-ineq1}
\|\boldsymbol{\mathcal{H}}_{\Gamma^{c}}\|_{*}&\leq&\|\boldsymbol{\mathcal{H}}_{\Gamma}\|_{*}+2\|\boldsymbol{\mathcal{X}}_{\Gamma^{c}}\|_{*}+\frac{\epsilon}{\lambda}\left(\frac{2\|\boldsymbol{\mathcal{X}}_{\Gamma^{c}}\|_{*}}{\sqrt{r}\eta_{1}}+2\sqrt{r}\eta_{1}\lambda+2\epsilon\right)\nonumber\\
&\leq&\sqrt{r}\|\boldsymbol{\mathcal{H}}_{\Gamma}\|_{F}+\frac{2(\sqrt{r}\eta_{1}\lambda+\epsilon)}{\sqrt{r}\eta_{1}\lambda}\|\boldsymbol{\mathcal{X}}_{\Gamma^{c}}\|_{*}+\frac{\epsilon}{\lambda}(2\sqrt{r}\eta_{1}\lambda+2\epsilon).
\end{eqnarray}
Combining (\ref{Lemma2-results}), (\ref{Theorem-1-results1}) and (\ref{The proof of Theorem-ineq1}) yields
\begin{eqnarray*}
 \|\boldsymbol{\mathcal{H}}_{\Gamma}\|_{F}&\leq&\eta_{1}\left(\frac{2\|\boldsymbol{\mathcal{X}}_{\Gamma^{c}}\|_{*}}{\sqrt{r}\eta_{1}}+2\sqrt{r}\eta_{1}\lambda+2\epsilon\right)\\
&&+\frac{\eta_{2}}{\sqrt{r}}\left(\sqrt{r}\|\boldsymbol{\mathcal{H}}_{\Gamma}\|_{F}+\frac{2(\sqrt{r}\eta_{1}\lambda+\epsilon)}{\sqrt{r}\eta_{1}\lambda}\|\boldsymbol{\mathcal{X}}_{\Gamma^{c}}\|_{*}+\frac{\epsilon}{\lambda}(2\sqrt{r}\eta_{1}\lambda+2\epsilon)\right)\\
&=&\eta_{2}\|\boldsymbol{\mathcal{H}}_{\Gamma}\|_{F}+\frac{2\sqrt{r}\eta_{1}(1+\eta_{2})\lambda+2\eta_{2}\epsilon}{r\eta_{1}\lambda}\|\boldsymbol{\mathcal{X}}_{\Gamma^{c}}\|_{*}+(\eta_{1}+\frac{\epsilon}{\lambda})(2\sqrt{r}\eta_{1}\lambda+2\epsilon).
\end{eqnarray*}
Note that $1-\eta_{2}>0$, so the above inequality leads to
\begin{equation}\label{The proof of Theorem-ineq2}
 \|\boldsymbol{\mathcal{H}}_{\Gamma}\|_{F}\leq\frac{2\sqrt{r}\eta_{1}(1+\eta_{2})\lambda+2\eta_{2}\epsilon}{r\eta_{1}(1-\eta_{2})\lambda}\|\boldsymbol{\mathcal{X}}_{\Gamma^{c}}\|_{*}+\frac{(\eta_{1}\lambda+\epsilon)(2\sqrt{r}\eta_{1}\lambda+2\epsilon)}{(1-\eta_{2})\lambda}.
\end{equation}
To prove (\ref{Theorem-1-results2}), application of (\ref{The proof of Theorem-ineq1}) and (\ref{The proof of Theorem-ineq2}) yields
\begin{eqnarray*}
\|\boldsymbol{\mathcal{H}}\|_{F}&\leq&\|\boldsymbol{\mathcal{H}}_{\Gamma}\|_{F}+\|\boldsymbol{\mathcal{H}}_{\Gamma^{c}}\|_{F}\\ &\leq&(\sqrt{n_{3}r}+1)\|\boldsymbol{\mathcal{H}}_{\Gamma}\|_{F}+\frac{2\sqrt{n_{3}}(\sqrt{r}\eta_{1}\lambda+\epsilon)}{\sqrt{r}\eta_{1}\lambda}\|\boldsymbol{\mathcal{X}}_{\Gamma^{c}}\|_{*}+\frac{\epsilon}{\lambda}\sqrt{n_{3}}(2\sqrt{r}\eta_{1}\lambda+2\epsilon)\\
&\leq&(\sqrt{n_{3}r}+1)\left(\frac{2\sqrt{r}\eta_{1}(1+\eta_{2})\lambda+2\eta_{2}\epsilon}{r\eta_{1}(1-\eta_{2})\lambda}\|\boldsymbol{\mathcal{X}}_{\Gamma^{c}}\|_{*}+\frac{(\eta_{1}\lambda+\epsilon)(2\sqrt{r}\eta_{1}\lambda+2\epsilon)}{(1-\eta_{2})\lambda}\right)\\
&&+\frac{2\sqrt{n_{3}}(\sqrt{r}\eta_{1}\lambda+\epsilon)}{\sqrt{r}\eta_{1}\lambda}\|\boldsymbol{\mathcal{X}}_{\Gamma^{c}}\|_{*}+\frac{\epsilon}{\lambda}\sqrt{n_{3}}(2\sqrt{r}\eta_{1}\lambda+2\epsilon)\\
&\leq&\frac{2\sqrt{r}\eta_{1}(2\sqrt{n_{3}r}+1+\eta_{2})\lambda+2(\sqrt{n_{3}r}+\eta_{2})\epsilon}{r\eta_{1}(1-\eta_{2})\lambda}\|\boldsymbol{\mathcal{X}}_{\Gamma^{c}}\|_{*}\\
&+&\frac{(\sqrt{n_{3}r}+1)\eta_{1}\lambda+(\sqrt{n_{3}r}-\sqrt{n_{3}}\eta_{2}+\sqrt{n_{3}}+1)\epsilon}{(1-\eta_{2})\lambda(2\sqrt{r}\eta_{1}\lambda+2\epsilon)^{-1}},
\end{eqnarray*}
where the second inequality is because of $\|\boldsymbol{\mathcal{H}}_{\Gamma^{c}}\|_{F}=\frac{1}{\sqrt{n_{3}}}\|\bdiag(\boldsymbol{\bar{\mathcal{H}}}_{\Gamma^{c}})\|_{F}\leq\frac{1}{\sqrt{n_{3}}}\|\bdiag(\boldsymbol{\bar{\mathcal{H}}}_{\Gamma^{c}})\|_{*}=\sqrt{n_{3}}\|\boldsymbol{\mathcal{H}}_{\Gamma^{c}}\|_{*}$. So far, we have completed the proof.
\end{proof}

We note that the obtained t-RIC condition (\ref{Th1_condition}) is related to the length $n_{3}$ of the third dimension. This is due to the fact that the discrete Fourier transform (DFT) is performed along the third dimension of $\boldsymbol{\mathcal{X}}$. Further, we want to stress that this crucial quantity $n_{3}$ is rigorously deduced from the t-product and makes the result of the tensor consistent with the matrix case. For general problems, let $n_{3}$ be the smallest size of three modes of the third-order tensor, e.g. $n_{3}=3$ for the third-order tensor $\boldsymbol{\mathcal{X}}\in\mathbb{R}^{h\times w \times 3}$ from a color image with size $h\times w$, where three frontal slices correspond to the R, G, B channels; $n_{3}=8$ for 3-D face detection using tensor data $\boldsymbol{\mathcal{X}}\in\mathbb{R}^{h\times w \times 8}$ with column $h$, row $w$, and depth mode $8$. Especially, when $n_{3}=1$, our model (\ref{tensor nuclear norm min unconstrained}) returns to the case of LRMR and the condition (\ref{Th1_condition}) degenerates to $\delta_{tr}^{\boldsymbol{\mathcal{M}}}<\sqrt{(t-1)/t}$ which has also been proved to be sharp by Cai, et al. \cite{Cai2013Sharp} for stable recovery via the constrained nuclear norm minimization for $t>4/3$. We note that, to the best of our knowledge, results like our Theorem \ref{Theorem-1} has not previously been reported in the literature.

Theorem \ref{Theorem-1} not only offers a sufficient condition for stably recovering tensor $\boldsymbol{\mathcal{X}}$ based on solving (\ref{tensor nuclear norm min unconstrained}), but also provides an error upper bound estimate for the recovery of tensor $\boldsymbol{\mathcal{X}}$ via RTNNM model.
This result clearly depicts the relationship among reconstruction error, the best $r$-term approximation, noise level $\epsilon$ and $\lambda$. There exist some special cases of Theorem \ref{Theorem-1} which is worth studying. For examples, one can associate the $\ell_2$-norm bounded noise level $\epsilon$ with the trade-off parameter $\lambda$ (such as $\epsilon=\lambda/2$) as \cite{Ge2018Stable,Shen2015Stable,Candes2011Tight}. This case can be summarized by Corollary \ref{corollary}. Notice that we can take a $\lambda$ which is close to zero such that $\tilde{C_{2}}\lambda$ and $\tilde{C_{4}}\lambda$ in (\ref{corollary-results1}),(\ref{corollary-results2}) are close to zero for the noise-free case $\boldsymbol{w}=\boldsymbol{0}$. Then Corollary \ref{corollary} shows that tensor $\boldsymbol{\mathcal{X}}$ can be approximately recovery by solving (\ref{tensor nuclear norm min unconstrained}) if $\|\boldsymbol{\mathcal{X}}_{-\max(r)}\|_{*}$ is small.

\begin{corollary}\label{corollary}
Suppose that the noise measurements $\boldsymbol{y}=\boldsymbol{\mathfrak{M}}(\boldsymbol{\mathcal{X}})+\boldsymbol{w}$ of tensor $\boldsymbol{\mathcal{X}}\in\mathbb{R}^{n_{1}\times n_{2}\times n_{3}}$ are observed with noise level $\|\boldsymbol{w}\|_{2}\leq\epsilon=\frac{\lambda}{2}$. If $\boldsymbol{\mathfrak{M}}$ satisfies t-RIP with
\begin{equation}\label{corollary_condition}
\delta_{tr}^{\boldsymbol{\mathfrak{M}}}<\sqrt{\frac{t-1}{n_{3}^{2}+t-1}}
\end{equation}
for certain $t>1$, then we have
\begin{equation}
\|\boldsymbol{\mathfrak{M}}(\boldsymbol{\hat{\mathcal{X}}}-\boldsymbol{\mathcal{X}})\|_{2}\leq \tilde{C_{1}}\|\boldsymbol{\mathcal{X}}_{-\max(r)}\|_{*}+ \tilde{C_{2}}\lambda,\label{corollary-results1}
\end{equation}
and
\begin{equation}\label{corollary-results2}
\|\boldsymbol{\hat{\mathcal{X}}}-\boldsymbol{\mathcal{X}}\|_{F}\leq \tilde{C_{3}}\|\boldsymbol{\mathcal{X}}_{-\max(r)}\|_{*}+ \tilde{C_{4}}\lambda,
\end{equation}
where $\boldsymbol{\hat{\mathcal{X}}}$ is the solution to (\ref{tensor nuclear norm min unconstrained}), and $\tilde{C_{i}}, i=1,2,3,4$ are denoted as
\begin{eqnarray*}
\tilde{C_{1}}&=&\frac{2}{\sqrt{r}\eta_{1}},\quad\tilde{C_{2}}=2\sqrt{r}\eta_{1}+1,\\
\tilde{C_{3}}&=&\frac{2\sqrt{r}\eta_{1}(2\sqrt{n_{3}r}+1+\eta_{2})+\sqrt{n_{3}r}+\eta_{2}}{r\eta_{1}(1-\eta_{2})},\\
\tilde{C_{4}}&=&\frac{2(\sqrt{n_{3}r}+1)\eta_{1}+\sqrt{n_{3}r}-\sqrt{n_{3}}\eta_{2}+\sqrt{n_{3}}+1}{2(1-\eta_{2})(2\sqrt{r}\eta_{1}+1)^{-1}}.
\end{eqnarray*}
\end{corollary}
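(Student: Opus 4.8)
The plan is to obtain Corollary \ref{corollary} as a direct specialization of Theorem \ref{Theorem-1}, with no new estimates required. The key observation is that the t-RIP hypothesis (\ref{corollary_condition}) is \emph{identical} to the hypothesis (\ref{Th1_condition}) of the theorem, so the entire apparatus of Theorem \ref{Theorem-1}---in particular the strict positivity $1-\eta_2>0$ and the two error bounds (\ref{Theorem-1-results1}) and (\ref{Theorem-1-results2}) with their constants $C_1,C_2,C_3,C_4$---applies verbatim. Moreover $\eta_1$ and $\eta_2$ depend only on $\delta_{tr}^{\boldsymbol{\mathfrak{M}}}$, $t$ and $n_3$, not on $\epsilon$ or $\lambda$, so they are unchanged. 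The only additional ingredient is the relation $\epsilon=\lambda/2$ linking the noise level to the regularization parameter. Thus the proof reduces to substituting $\epsilon=\lambda/2$ into $C_1,C_2,C_3,C_4$ and collecting the common factor of $\lambda$.

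First I would invoke Theorem \ref{Theorem-1} to get (\ref{Theorem-1-results1}) and (\ref{Theorem-1-results2}). Since $\tilde C_1=C_1=\frac{2}{\sqrt r\,\eta_1}$ involves neither $\epsilon$ nor $\lambda$, the coefficient of $\|\boldsymbol{\mathcal{X}}_{-\max(r)}\|_{*}$ in the first bound is untouched. For the additive term, putting $\epsilon=\lambda/2$ into $C_2=2\sqrt r\,\eta_1\lambda+2\epsilon$ gives $C_2=(2\sqrt r\,\eta_1+1)\lambda=\tilde C_2\lambda$, which is exactly (\ref{corollary-results1}). For the Frobenius-norm bound, substituting $\epsilon=\lambda/2$ into the numerator of $C_3$ produces an overall factor $\lambda$ that cancels the $\lambda$ in the denominator $r\eta_1(1-\eta_2)\lambda$, leaving precisely $\tilde C_3$, so the coefficient of $\|\boldsymbol{\mathcal{X}}_{-\max(r)}\|_{*}$ becomes $\tilde C_3$.

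The only genuinely fiddly piece is $C_4$, whose denominator carries the factor $(2\sqrt r\,\eta_1\lambda+2\epsilon)^{-1}$. Here I would first read $C_4$ as the product of $(2\sqrt r\,\eta_1\lambda+2\epsilon)$ with the remaining fraction, then set $\epsilon=\lambda/2$, using $2\sqrt r\,\eta_1\lambda+2\epsilon=(2\sqrt r\,\eta_1+1)\lambda$ together with $(\sqrt{n_3r}+1)\eta_1\lambda+(\sqrt{n_3r}-\sqrt{n_3}\eta_2+\sqrt{n_3}+1)\epsilon=\tfrac{\lambda}{2}\big(2(\sqrt{n_3r}+1)\eta_1+\sqrt{n_3r}-\sqrt{n_3}\eta_2+\sqrt{n_3}+1\big)$. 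Multiplying these two expressions and dividing by $(1-\eta_2)\lambda$ leaves a single surviving factor $\lambda$ and reproduces $\tilde C_4\lambda$, giving (\ref{corollary-results2}).

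I expect the main obstacle to be purely bookkeeping rather than conceptual: one must track carefully how many powers of $\lambda$ appear after the substitution, especially in $C_4$, where the inverse factor $(2\sqrt r\,\eta_1\lambda+2\epsilon)^{-1}$ is in fact a multiplicative term in disguise and is easy to miscount. Once the $\lambda$-powers are balanced, the four identities $\tilde C_1=C_1$, $\tilde C_2\lambda=C_2$, $\tilde C_3=C_3$ and $\tilde C_4\lambda=C_4$ follow by elementary algebra, completing the proof.
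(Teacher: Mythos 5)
Your proposal is correct and follows exactly the route the paper intends: the corollary is stated as the specialization $\epsilon=\lambda/2$ of Theorem \ref{Theorem-1}, and your substitution into $C_{1},C_{2},C_{3},C_{4}$ (including the careful reading of the factor $(2\sqrt{r}\eta_{1}\lambda+2\epsilon)^{-1}$ in the denominator of $C_{4}$ as a multiplicative term) reproduces $\tilde{C_{1}}$, $\tilde{C_{2}}\lambda$, $\tilde{C_{3}}$, $\tilde{C_{4}}\lambda$ precisely. The paper offers no separate proof beyond this observation, so your verification matches it in full.
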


\section{Numerical experiments}
\label{Numerical experiments}
In this section, we present several numerical experiments to corroborate our analysis.
\subsection{Optimization by ADMM}
We perform $\boldsymbol{y}=\boldsymbol{M}\rm vec(\boldsymbol{\mathcal{X}})+\boldsymbol{w}$ to get the linear noise measurements instead of $\boldsymbol{y}=\boldsymbol{\mathfrak{M}}(\boldsymbol{\mathcal{X}})+\boldsymbol{w}$. Then the RTNNM model \eqref{tensor nuclear norm min unconstrained} can be reformulated as
\begin{equation}\label{reshape RTNNM}
  \min \limits_{{\boldsymbol{\mathcal{X}}}\in\mathbb{R}^{n_{1} \times n_{2} \times n_{3}}}~\|\boldsymbol{\mathcal{X}}\|_{*}+\frac{1}{2\lambda}\|\boldsymbol{y}-\boldsymbol{M}\rm vec(\boldsymbol{\mathcal{X}})\|_{2}^{2},
\end{equation}
where $\boldsymbol{y},\boldsymbol{w}\in\mathbb{R}^{m}$, $\boldsymbol{\mathcal{X}}\in\mathbb{R}^{n_{1} \times n_{2} \times n_{3}}$, $\boldsymbol{M}\in\mathbb{R}^{m\times (n_{1}n_{2}n_{3})}$ is a Gaussian measurement ensemble and $\rm vec(\boldsymbol{\mathcal{X}})$ denotes the vectorization of $\boldsymbol{\mathcal{X}}$. We adopt the alternating direction method of multipliers (ADMM) \cite{boyd2011distributed} to solve this kind of problem quickly and accurately. We firstly introduce an auxiliary variable $\boldsymbol{\mathcal{Z}}\in\mathbb{R}^{n_{1} \times n_{2} \times n_{3}}$ so that \eqref{reshape RTNNM} forms a constrained optimization problem
\begin{equation*}\label{constrained RTNNM}
  \min \limits_{{\boldsymbol{\mathcal{X}}}\in\mathbb{R}^{n_{1} \times n_{2} \times n_{3}}}~\|\boldsymbol{\mathcal{X}}\|_{*}+\frac{1}{2\lambda}\|\boldsymbol{y}-\boldsymbol{M}\rm vec(\boldsymbol{\mathcal{Z}})\|_{2}^{2},~~s.t.~~\boldsymbol{\mathcal{X}}=\boldsymbol{\mathcal{Z}}.
\end{equation*}
The augmented Lagrangian function of the above constrained optimization problem is
\begin{align*}\label{the augmented Lagrangian function}
L(\boldsymbol{\mathcal{X}},\boldsymbol{\mathcal{Z}},\boldsymbol{\mathcal{K}}) = \lambda\|\boldsymbol{\mathcal{X}}\|_{*}+\frac{1}{2}\|\boldsymbol{y}-\boldsymbol{M}\rm vec(\boldsymbol{\mathcal{Z}})\|_{2}^{2}+\langle\boldsymbol{\mathcal{K}},\boldsymbol{\mathcal{X}}-\boldsymbol{\mathcal{Z}}\rangle+\frac{\rho}{2}\|\boldsymbol{\mathcal{X}}-\boldsymbol{\mathcal{Z}}\|_{2}^{2}
\end{align*}
where $\rho$ is a positive scalar and $\boldsymbol{\mathcal{K}}$ is the Lagrangian multiplier tensor. By minimizing the augmented Lagrangian function, we can obtain the closed-form solutions of the variables $\boldsymbol{\mathcal{X}}$ and $\boldsymbol{\mathcal{Z}}$. A detailed update process is shown in Algorithm \ref{our Algorithm}. In particular, according to Theorem 4.2 in \cite{Lu2019Tensor}, the proximal operator in Step 3 of Algorithm \ref{our Algorithm} can be computed by exploiting the tensor Singular Value Thresholding (t-SVT) algorithm.

\subsection{Experiment Results}
All numerical experiments are tested on a PC with 4 GB of RAM and Intel core i5-4200M (2.5GHz). In order to avoid randomness, we perform 50 times against each test and report the average result.
\begin{algorithm}[h]
\caption{Algorithm for solving RTNNM \eqref{tensor nuclear norm min unconstrained}}\label{our Algorithm}
\begin{algorithmic}[1]
\REQUIRE $\boldsymbol{M}\in\mathbb{R}^{m\times (n_{1}n_{2}n_{3})}$, $\boldsymbol{y}\in\mathbb{R}^{m}$.
\STATE Initialize $\boldsymbol{\mathcal{X}}_{0} = \boldsymbol{\mathcal{Z}}_{0} = \boldsymbol{\mathcal{K}}_{0} = \boldsymbol{0}$, $\rho_{0}=10^{-4}$, $\rho_{max}=10^{10}$, $\vartheta=1.5$, $\varpi=10^{-8}$ and $k=0$.
\WHILE{no convergence}
\STATE Update $\boldsymbol{\mathcal{X}}_{k+1}$ by $\boldsymbol{\mathcal{X}}_{k+1}=\arg\min\limits_{{\boldsymbol{\mathcal{X}}}} \lambda\|\boldsymbol{\mathcal{X}}\|_{*} + \frac{\rho_{k}}{2}\|\boldsymbol{\mathcal{X}}-\boldsymbol{\mathcal{Z}}_{k}+\frac{\boldsymbol{\mathcal{K}}_{k}}{\rho_{k}}\|_{F}^{2}$.
\STATE Update $\boldsymbol{\mathcal{Z}}_{k+1}$ by \\
$\boldsymbol{z}=\arg\min\limits_{{\boldsymbol{\mathcal{Z}}}} (\boldsymbol{M}^{T}\boldsymbol{M}+\rho_{k}\boldsymbol{I})^{-1}(\boldsymbol{M}^{T}\boldsymbol{y}+\rm vec(\boldsymbol{\mathcal{K}}_{k})+\rho_{k}\rm vec\left(\boldsymbol{\mathcal{X}}_{k+1})\right)$ and $\boldsymbol{\mathcal{Z}}_{k+1}\leftarrow \boldsymbol{z}:$ reshape $\boldsymbol{z}$ to the tensor $\boldsymbol{\mathcal{Z}}_{k+1}$ of size $n_{1} \times n_{2} \times n_{3}$.
\STATE Update $\boldsymbol{\mathcal{K}}_{k+1}$ by $\boldsymbol{\mathcal{K}}_{k+1}=\boldsymbol{\mathcal{K}}_{k}+\rho_{k}(\boldsymbol{\mathcal{X}}_{k+1}-\boldsymbol{\mathcal{Z}}_{k+1})$.
\STATE Update $\rho_{k+1}$ by $\rho_{k+1}=\min(\vartheta\rho_{k}, \rho_{max})$.
\STATE Check the convergence conditions\\
$\|\boldsymbol{\mathcal{X}}_{k+1}-\boldsymbol{\mathcal{X}}_{k}\|_{\infty}\leq\varpi$, $\|\boldsymbol{\mathcal{Z}}_{k+1}-\boldsymbol{\mathcal{Z}}_{k}\|_{\infty}\leq\varpi$, $\|\boldsymbol{\mathcal{X}}_{k+1}-\boldsymbol{\mathcal{Z}}_{k+1}\|_{\infty}\leq\varpi$.
\STATE Update $k\leftarrow k+1$.
\ENDWHILE
\ENSURE $\boldsymbol{\mathcal{X}}=\boldsymbol{\mathcal{X}}_{k}$, $\boldsymbol{\mathcal{Z}}=\boldsymbol{\mathcal{Z}}_{k}$ and $\boldsymbol{\mathcal{K}}=\boldsymbol{\mathcal{K}}_{k}$.
\end{algorithmic}
\end{algorithm}

First, we generate a tubal rank $r$ tensor $\boldsymbol{\mathcal{X}}\in\mathbb{R}^{n \times n \times n_{3}}$ as a product $\boldsymbol{\mathcal{X}}=\boldsymbol{\mathcal{X}}_{1} \ast \boldsymbol{\mathcal{X}}_{2}$ where $\boldsymbol{\mathcal{X}}_{1}\in\mathbb{R}^{n \times r \times n_{3}}$ and $\boldsymbol{\mathcal{X}}_{2}\in\mathbb{R}^{r \times n \times n_{3}}$ are two tensors with entries independently sampled from a standard Gaussian distribution. Next, we generate a measurement matrix $\boldsymbol{M}\in\mathbb{R}^{m\times (n^{2}n_{3})}$ with i.i.d. $\mathcal{N}(0,1/m)$ entries. Using $\boldsymbol{\mathcal{X}}$ and $\boldsymbol{M}$, the measurements $\boldsymbol{y}$ are produced by $\boldsymbol{y}=\boldsymbol{M}\rm vec(\boldsymbol{\mathcal{X}})+\boldsymbol{w}$, where $\boldsymbol{w}$ is the Gaussian white noise with mean $0$ and variance $\sigma^{2}$ (the greater $\sigma$, the greater the noise level $\epsilon$). We uniformly evaluate the recovery performance of the model by the signal-to-noise ratio (SNR) defined as $20\log(\|\boldsymbol{\mathcal{X}}\|_{F}/\|\boldsymbol{\mathcal{X}}-\hat{\boldsymbol{\mathcal{X}}}\|_{F})$ in decibels (dB) (the greater the SNR, the smaller the reconstruction error). The key to studying the RTNNM model \eqref{tensor nuclear norm min unconstrained} is to explain the relationship among reconstruction error, noise level $\epsilon$ and $\lambda$. Therefore, we design three sets of experiments to explain it. Case 1: $n=10$, $n_{3}=5$, $r=0.1n$; Case 2: $n=20$, $n_{3}=5$, $r=0.2n$; Case 3: $n=30$, $n_{3}=5$, $r=0.3n$. According to \cite{zhang2019tensor}, the number of samples in case 1 is set to $2r(2n+1)n_{3}$, and the number of samples in case 2 and 3 is set to $1.5r(2n+1)n_{3}$.

All SNR values for $\sigma$ varying among $\{0.01,0.03,0.05,0.07,0.1\}$ and regularization parameters $\lambda$ varying among $\{10^{1},10^{0},10^{-1},10^{-2},10^{-3},10^{-4}\}$ in three cases are provided in Table \ref{tab1} with the best results highlighted in bold. It can be seen that there exist a consistent phenomenon for low-tubal-rank tensor recovery at different scales. When the regularization parameter $\lambda=10^{-1}$, as the standard deviation $\sigma$ increases, the SNR gradually decreases. There is only a slight trend for other regularization parameters $\lambda$. In addition, for each fixed noise level, the regularization parameter $\lambda=10^{-1}$ corresponds to the maximum SNR, which means the best of the low-tubal-rank tensor recovery. Therefore, $\lambda=10^{-1}$ is the optimal regularization parameter of the RTNNM model \eqref{tensor nuclear norm min unconstrained} in three cases. We plot the data in Table \ref{tab1} as Figure \ref{SNR}, which allows us to see the results of the above analysis at a glance. Thus, these experiments clearly demonstrate the quantitative correlation among reconstruction error, noise level $\epsilon$ and $\lambda$.

\begin{table*}[!htb]
  \centering
  \fontsize{10}{8}\selectfont
  \begin{threeparttable}
  \caption{SNR for different noise levels and regularization parameters.}\label{Tabel_Noise case 1}
    \begin{tabular}{ccccccc}
    \multicolumn{6}{c}{Case 1: $n=10$, $n_{3}=5$, $r=0.1n$}\cr
    \toprule
    \multirow{1}{*}{SNR (dB)}&

    $\sigma_{1}=0.01$&$\sigma_{2}=0.03$&$\sigma_{3}=0.05$&$\sigma_{4}=0.07$&$\sigma_{5}=0.1$\cr
    \midrule
    $\lambda_{1}=10^{1}$&      12.0980&12.0981&12.0970&12.0948&12.0892\cr
    $\lambda_{2}=10^{0}$&      25.8704&25.7607&25.5490&25.2479&24.6683\cr
    $\lambda_{3}=10^{-1}$&     \textbf{31.0711}&\textbf{30.4264}&\textbf{29.4052}&\textbf{28.2704}&\textbf{26.6232}\cr
    $\lambda_{4}=10^{-2}$&     22.7064&22.5787&22.3462&22.0335&21.4687\cr
    $\lambda_{5}=10^{-3}$&     5.5688&5.5679&5.5660&5.5632&5.5573\cr
    $\lambda_{6}=10^{-4}$&     2.7720&2.7718&2.7713&2.7706&2.7691\cr
    \bottomrule
    \multicolumn{6}{c}{Case 2: $n=20$, $n_{3}=5$, $r=0.2n$}&\cr
    \toprule
    \multirow{1}{*}{SNR (dB)}&

    $\sigma_{1}=0.01$&$\sigma_{2}=0.03$&$\sigma_{3}=0.05$&$\sigma_{4}=0.07$&$\sigma_{5}=0.1$\cr
    \midrule
    $\lambda_{1}=10^{1}$&      11.0030&11.0015&10.9986&10.9941&10.9845\cr
    $\lambda_{2}=10^{0}$&      20.6674&20.6139&20.5124&20.3671&20.0778\cr
    $\lambda_{3}=10^{-1}$&     \textbf{23.6323}&\textbf{23.4598}&\textbf{23.1416}&\textbf{22.7156}&\textbf{21.9574}\cr
    $\lambda_{4}=10^{-2}$&     19.5214&19.4528&19.3200&19.1312&18.7651\cr
    $\lambda_{5}=10^{-3}$&     6.4141&6.4128&6.4102&6.4065&6.3984\cr
    $\lambda_{6}=10^{-4}$&     4.5625&4.5620&4.5608&4.5591&4.5555\cr
    \bottomrule
    \multicolumn{6}{c}{Case 3: $n=30$, $n_{3}=5$, $r=0.3n$}&\cr
    \toprule
    \multirow{1}{*}{SNR (dB)}&

    $\sigma_{1}=0.01$&$\sigma_{2}=0.03$&$\sigma_{3}=0.05$&$\sigma_{4}=0.07$&$\sigma_{5}=0.1$\cr
    \midrule
    $\lambda_{1}=10^{1}$&      14.7061&14.7034&14.6986 &14.6917&14.6774 \cr
    $\lambda_{2}=10^{0}$&      30.0564&29.8961&29.5879 &29.1587&28.3575 \cr
    $\lambda_{3}=10^{-1}$&     48.0263&\textbf{41.8665}&\textbf{37.3142}&\textbf{33.9986}&\textbf{30.3401} \cr
    $\lambda_{4}=10^{-2}$&     \textbf{50.1381}&38.8254&33.6326&30.2868&26.8153 \cr
    $\lambda_{5}=10^{-3}$&     14.1284&14.1003&14.0445 &13.9624&13.7947 \cr
    $\lambda_{6}=10^{-4}$&     11.0540&11.0440&11.0242 &10.9946&10.9324 \cr
    \bottomrule
    \end{tabular}
    \label{tab1}
    \end{threeparttable}
\end{table*}

\begin{figure}[!htbp]
\begin{center}
\subfigure[]{\includegraphics[width=1\textwidth]{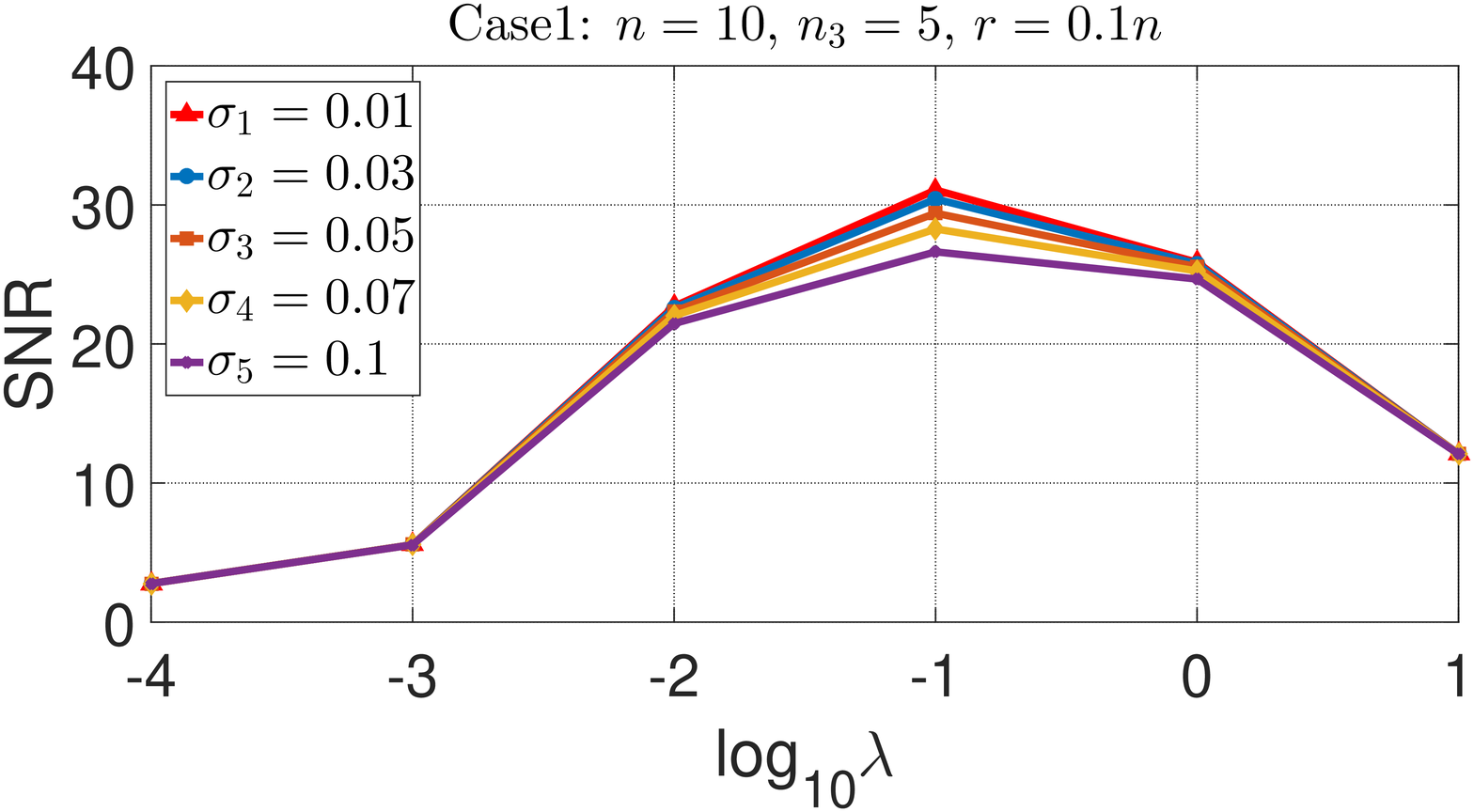}}
\hspace*{9pt}
\subfigure[]{\includegraphics[width=1\textwidth]{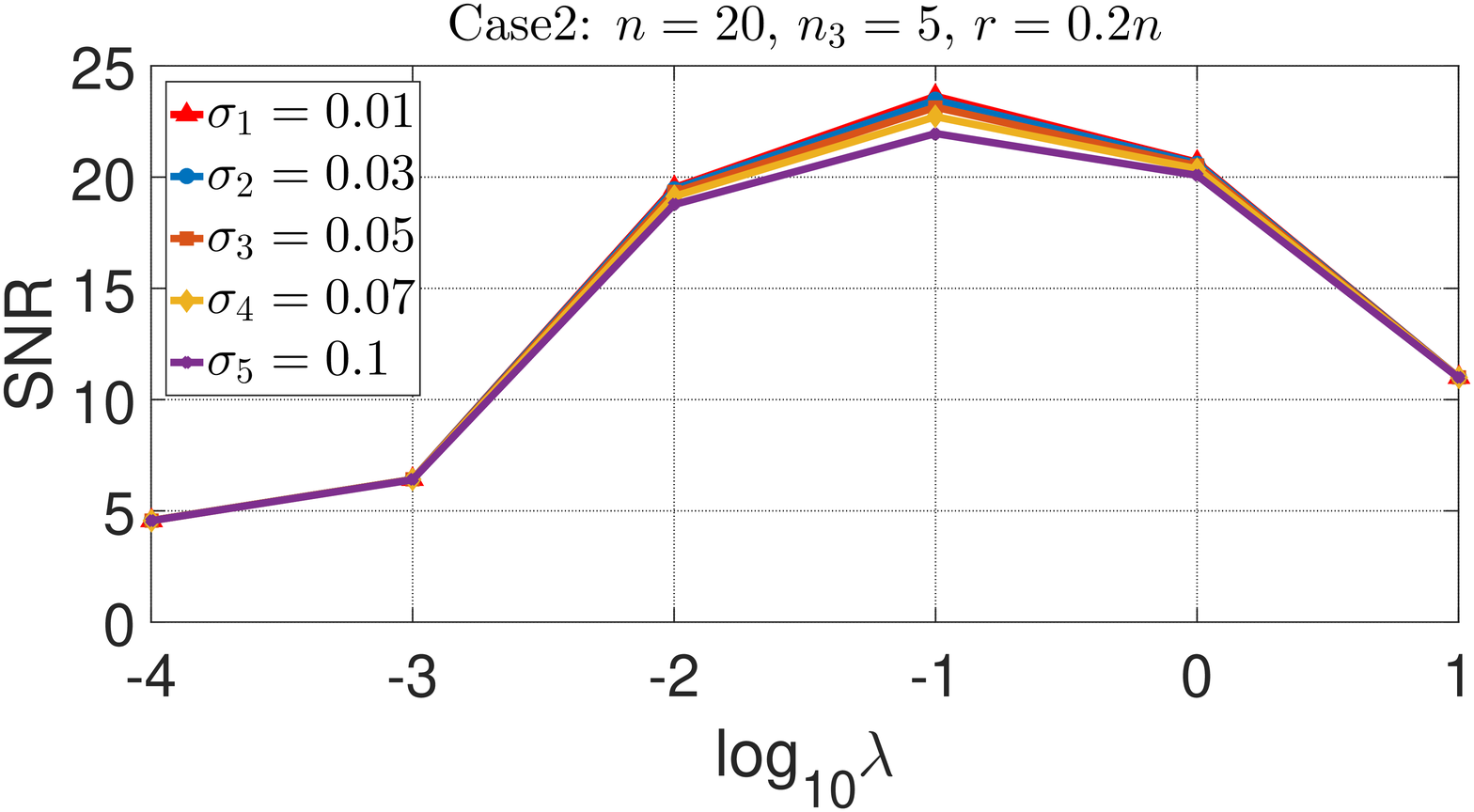}}
\hspace*{9pt}
\subfigure[]{\includegraphics[width=1\textwidth]{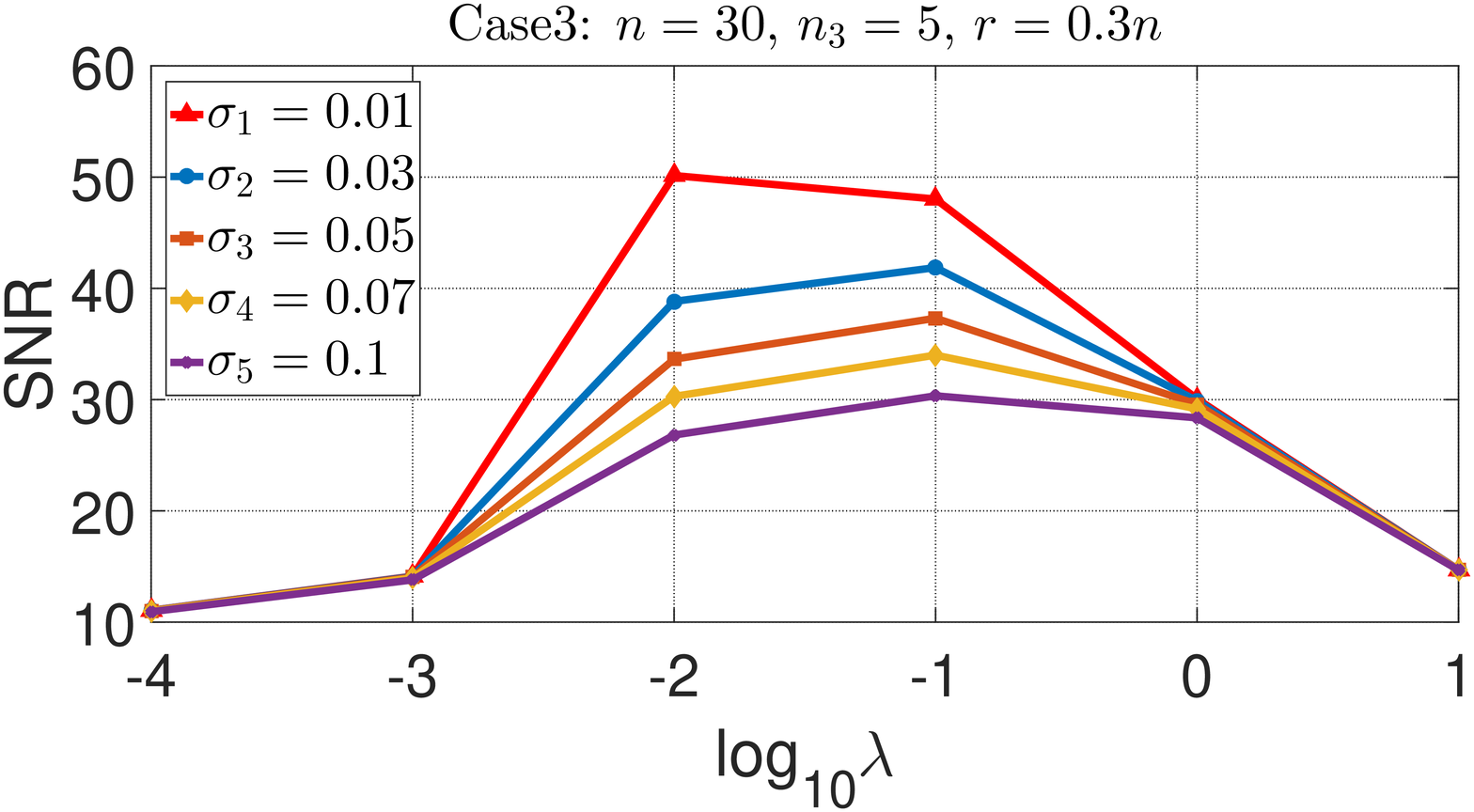}}
\end{center}
\vspace*{-14pt}
\caption{SNR for different noise levels and regularization parameters. (a) $\lambda$ versus SNR with $n=10$, $n_{3}=5$ and $r=0.1n$. (b) $\lambda$ versus SNR with $n=20$, $n_{3}=5$ and $r=0.2n$. (c) $\lambda$ versus SNR with $n=30$, $n_{3}=5$ and $r=0.3n$.}\label{SNR}
\end{figure}

\subsection{Discussion}
According to the established theoretical results in Theorem \ref{Theorem-1}, to obtain an exact low tubal rank solution (i.e., the case when $\epsilon=0$ and $\boldsymbol{\mathcal{X}}$ is a tensor with tubal rank $r$) of RTNNM model \eqref{tensor nuclear norm min unconstrained}, one has to set $\lambda$ to be $0$. However, when it comes to the practice use, such a setting is obviously impossible. In fact, the practical wisdom (see, e.g., \cite{lai2013improved,yin2015minimization,hu2017group}) also indicates that the selection of $\lambda$ may also be effected by the data scale, sampling number, approximation accuracy, objective function, low-rankness of the desired tensor, etc. On the other hand, if taking a closer look at the obtained theoretical results in the presence of noise (i.e., $\epsilon\neq0$), one will find that setting the lambda to be a sufficient big or small value is also not helpful to approach an expected approximate solution. In a word, one has to carefully select such a regularization parameter to obtained a desired optimal solution. This is exactly what we are trying to explain in our experiments.

\section{Conclusion}
\label{Conclusion}
In this paper, a heuristic notion of tensor Restricted Isometry Property (t-RIP) has been introduced based on tensor Singular Value Decomposition (t-SVD). Comparing with other definitions \cite{shi2013guarantees,rauhut2017low}, it is more representative as a higher-order generalization of the traditional RIP for vector and matrix recovery since the forms and properties of t-RIP and t-SVD are consistent with the vector/matrix case. This point is crucial because this guarantees that our theoretical investigation can be done in a similar way as CS and LRMR. A sufficient condition was presented, based on the RTNNM model, for stably recovering a given low-tubal-rank tensor that is corrupted with an $\ell_{2}$-norm bounded noise. However, this condition only considers the $\delta_{tr}^{\boldsymbol{\mathfrak{M}}}$ of the map $\boldsymbol{\mathfrak{M}}$ when $t$ is limited to $t>1$. In the future, we hope to provide a complete answer for $\delta_{tr}^{\boldsymbol{\mathfrak{M}}}$ when $0 < t \leq 1$.

\section*{Acknowledgment}
This work was supported by National Natural Science Foundation of China (Grant Nos. 61673015, 61273020), Fundamental Research Funds for the Central Universities (Grant Nos. XDJK2018C076, SWU1809002) and China Postdoctoral Science Foundation (Grant No. 2018M643390) and Graduate Student Scientific Research Innovation Projects in Chongqing (Grant No. CYB19083).

\appendix
\section{Proof of Lemma \ref{Lemma-2}}\label{Appendix A}
\begin{proof}
\emph{\textbf{Step 1: Sparse Representation of a Polytope}}.

Without loss of generality, assume that $tr$ is an integer for a given $t>1$. Next we divide the index set $\Gamma^{c}$ into two disjoint subsets, that is,
\begin{equation*}
\Gamma_{1}=\{i\in \Gamma^{c}: \boldsymbol{\mathcal{S}}_{\boldsymbol{\mathcal{H}}}(i,i,1)>\phi\},~~\Gamma_{2}=\{i\in \Gamma^{c}: \boldsymbol{\mathcal{S}}_{\boldsymbol{\mathcal{H}}}(i,i,1)\leq\phi\},
\end{equation*}
where $\phi\triangleq\|\boldsymbol{\mathcal{H}}_{\Gamma^{c}}\|_{*}/((t-1)r)$. Clearly,
\begin{equation*}
\Gamma_{1}\cup \Gamma_{2}=\Gamma^{c}~\rm and~\Gamma_{1}\cap \Gamma=\emptyset,
\end{equation*}
which implies that $\boldsymbol{\mathcal{H}}=\boldsymbol{\mathcal{H}}_{\Gamma}+\boldsymbol{\mathcal{H}}_{\Gamma^{c}}=\boldsymbol{\mathcal{H}}_{\Gamma}+\boldsymbol{\mathcal{H}}_{\Gamma_{1}}+\boldsymbol{\mathcal{H}}_{\Gamma_{2}}$ and $\|\boldsymbol{\mathcal{H}}_{\Gamma}\|_{F}\leq\|\boldsymbol{\mathcal{H}}_{\Gamma\cup \Gamma_{1}}\|_{F}$, respectively. In order to prove (\ref{Lemma2-results}), we only need to check
\begin{equation}\label{Lemma2-proof-inequality1}
\|\boldsymbol{\mathcal{H}}_{\Gamma\cup \Gamma_{1}}\|_{F}\leq \eta_{1}\|\boldsymbol{\mathfrak{M}}(\boldsymbol{\mathcal{H}})\|_{2}+\frac{\eta_{2}}{\sqrt{r}}\|\boldsymbol{\mathcal{H}}_{\Gamma^{c}}\|_{*}.
\end{equation}
Let $\|\boldsymbol{s}_{\boldsymbol{\mathcal{H}}_{\Gamma_{1}}}\|_{1}\triangleq\sum_{i\in\Gamma_{1}}\boldsymbol{\mathcal{S}}_{\boldsymbol{\mathcal{H}}}(i,i,1)=\|\boldsymbol{\mathcal{H}}_{\Gamma_{1}}\|_{*}$, where $\boldsymbol{s}_{\boldsymbol{\mathcal{H}}_{\Gamma_{1}}}$ is denoted as the diagonal vector of first frontal slice of $\boldsymbol{\mathcal{S}}_{\boldsymbol{\mathcal{H}}}$ whose element $\boldsymbol{\mathcal{S}}_{\boldsymbol{\mathcal{H}}_{\Gamma_{1}}}(i,i,1)=\boldsymbol{\mathcal{S}}_{\boldsymbol{\mathcal{H}}}(i,i,1)$ for $i\in\Gamma_{1}$ and $\boldsymbol{\mathcal{S}}_{\boldsymbol{\mathcal{H}}_{\Gamma_{1}}}(i,i,1)=0$ otherwise. Since all non-zero entries of vector $\boldsymbol{s}_{\boldsymbol{\mathcal{H}}_{\Gamma_{1}}}$ have magnitude larger than $\phi$, we have,
\begin{equation*}
\|\boldsymbol{s}_{\boldsymbol{\mathcal{H}}_{\Gamma_{1}}}\|_{1}=\|\boldsymbol{\mathcal{H}}_{\Gamma_{1}}\|_{*}>|\Gamma_{1}|\frac{\|\boldsymbol{\mathcal{H}}_{\Gamma^{c}}\|_{*}}{(t-1)r}
\geq|\Gamma_{1}|\frac{\|\boldsymbol{\mathcal{H}}_{\Gamma_{1}}\|_{*}}{(t-1)r}=\frac{|\Gamma_{1}|}{(t-1)r}\|\boldsymbol{s}_{\boldsymbol{\mathcal{H}}_{\Gamma_{1}}}\|_{1}.
\end{equation*}
Namely $|\Gamma_{1}|<(t-1)r$. Besides, we also have
\begin{equation*}
\|\boldsymbol{s}_{\boldsymbol{\mathcal{H}}_{\Gamma_{2}}}\|_{1}=\|\boldsymbol{\mathcal{H}}_{\Gamma_{2}}\|_{*}=\|\boldsymbol{\mathcal{H}}_{\Gamma^{c}}\|_{*}-\|\boldsymbol{\mathcal{H}}_{\Gamma_{1}}\|_{*}
\leq((t-1)r-|\Gamma_{1}|)\phi
\end{equation*}
and
\begin{equation*}
\|\boldsymbol{s}_{\boldsymbol{\mathcal{H}}_{\Gamma_{2}}}\|_{\infty}\triangleq\max_{i\in \Gamma_{2}}\boldsymbol{\mathcal{S}}_{\boldsymbol{\mathcal{H}}}(i,i,1)\leq\phi.
\end{equation*}
Now, since $\boldsymbol{s}_{\boldsymbol{\mathcal{H}}_{\Gamma_{2}}}\in T(\phi,(t-1)r-|\Gamma_{1}|)$, applying Lemma \ref{Sparse-Representation}, $\boldsymbol{s}_{\boldsymbol{\mathcal{H}}_{\Gamma_{2}}}$ can be rewritten as:
\begin{equation*}
  \boldsymbol{s}_{\boldsymbol{\mathcal{H}}_{\Gamma_{2}}}=\sum_{i=1}^{N}\gamma_{i}\boldsymbol{g}_{i},
\end{equation*}
where $\boldsymbol{g}_{i}\in U(\phi, (t-1)r-|\Gamma_{1}|, \boldsymbol{s}_{\boldsymbol{\mathcal{H}}_{\Gamma_{2}}})$ and $0\leq\gamma_{i}\leq1$, $\sum_{i=1}^{N}\gamma_{i}=1$.\\

\emph{\textbf{Step 2: Consequence of t-RIP}}.

Furthermore, define
\begin{eqnarray*}
\boldsymbol{s}_{\boldsymbol{\mathcal{B}}_{i}}&=&(1+\delta_{tr}^{\boldsymbol{\mathfrak{M}}})\boldsymbol{s}_{\boldsymbol{\mathcal{H}}_{\Gamma\cup\Gamma_{1}}}+\delta_{tr}^{\boldsymbol{\mathfrak{M}}}\boldsymbol{s}_{\boldsymbol{\mathcal{G}}_{i}},\quad\boldsymbol{s}_{\boldsymbol{\mathcal{P}}_{i}}=(1-\delta_{tr}^{\boldsymbol{\mathfrak{M}}})\boldsymbol{s}_{\boldsymbol{\mathcal{H}}_{\Gamma\cup\Gamma_{1}}}-\delta_{tr}^{\boldsymbol{\mathfrak{M}}}\boldsymbol{s}_{\boldsymbol{\mathcal{G}}_{i}},\\
\boldsymbol{\mathcal{G}}_{i}&=&\sum\nolimits_{j=1}^{\kappa}\boldsymbol{\mathcal{U}}_{\boldsymbol{\mathcal{H}}}(:,j,:) \ast \boldsymbol{\mathcal{S}}_{\boldsymbol{\mathcal{G}}_{i}}(j,j,:) \ast \boldsymbol{\mathcal{V}}_{\boldsymbol{\mathcal{H}}}(:,j,:)^{*},\\
\boldsymbol{\mathcal{B}}_{i}&=&\sum\nolimits_{j=1}^{\kappa}\boldsymbol{\mathcal{U}}_{\boldsymbol{\mathcal{H}}}(:,j,:) \ast \boldsymbol{\mathcal{S}}_{\boldsymbol{\mathcal{B}}_{i}}(j,j,:) \ast \boldsymbol{\mathcal{V}}_{\boldsymbol{\mathcal{H}}}(:,j,:)^{*},\\
\boldsymbol{\mathcal{P}}_{i}&=&\sum\nolimits_{j=1}^{\kappa}\boldsymbol{\mathcal{U}}_{\boldsymbol{\mathcal{H}}}(:,j,:) \ast \boldsymbol{\mathcal{S}}_{\boldsymbol{\mathcal{P}}_{i}}(j,j,:) \ast \boldsymbol{\mathcal{V}}_{\boldsymbol{\mathcal{H}}}(:,j,:)^{*}.
\end{eqnarray*}
Then it is not hard to see that both $\boldsymbol{B}_{i}$ and $\boldsymbol{P}_{i}$ are all tensors with tubal rank at most $tr$ for $i=1,2,\cdots, N$, and
\begin{equation*}
 \boldsymbol{\mathcal{H}}_{\Gamma_{2}}=\sum\nolimits_{i=1}^{N}\gamma_{i}\boldsymbol{\mathcal{G}}_{i},~~\boldsymbol{\mathcal{B}}_{i}=(1+\delta_{tr}^{\boldsymbol{\mathfrak{M}}})\boldsymbol{\mathcal{H}}_{\Gamma\cup\Gamma_{1}}+\delta_{tr}^{\boldsymbol{\mathfrak{M}}}\boldsymbol{\mathcal{G}}_{i},~~\boldsymbol{\mathcal{P}}_{i}=(1-\delta_{tr}^{\boldsymbol{\mathfrak{M}}})\boldsymbol{\mathcal{H}}_{\Gamma\cup\Gamma_{1}}-\delta_{tr}^{\boldsymbol{\mathfrak{M}}}\boldsymbol{\mathcal{G}}_{i}.
\end{equation*}
Now we estimate the upper bounds of
\begin{equation*}
\xi\triangleq\sum\nolimits_{i=1}^{N}\gamma_{i}\left(\|\boldsymbol{\mathfrak{M}}(\boldsymbol{\mathcal{B}}_{i})\|_{2}^{2}-\|\boldsymbol{\mathfrak{M}}(\boldsymbol{\mathcal{P}}_{i})\|_{2}^{2}\right).
\end{equation*}
Applying Definition \ref{Tensor RIP Definition}, we have
\begin{eqnarray}\label{Lemma2-proof-inequality3}
\xi&=&4\delta_{tr}^{\boldsymbol{\mathfrak{M}}}\sum\nolimits_{i=1}^{N}\gamma_{i}\left\langle\boldsymbol{\mathfrak{M}}(\boldsymbol{\mathcal{H}}_{\Gamma\cup\Gamma_{1}}), \boldsymbol{\mathfrak{M}}(\boldsymbol{\mathcal{H}}_{\Gamma\cup\Gamma_{1}}+\boldsymbol{\mathcal{G}}_{i})\right\rangle\nonumber\\
&\stackrel{(a)}{=}&4\delta_{tr}^{\boldsymbol{\mathfrak{M}}}\left\langle\boldsymbol{\mathfrak{M}}(\boldsymbol{\mathcal{H}}_{\Gamma\cup\Gamma_{1}}), \boldsymbol{\mathfrak{M}}\left(\boldsymbol{\mathcal{H}}_{\Gamma\cup\Gamma_{1}}+\sum\nolimits_{i=1}^{N}\gamma_{i}\boldsymbol{\mathcal{G}}_{i}\right)\right\rangle\nonumber\\
&\stackrel{(b)}{=}&4\delta_{tr}^{\boldsymbol{\mathfrak{M}}}\langle\boldsymbol{\mathfrak{M}}(\boldsymbol{\mathcal{H}}_{\Gamma\cup\Gamma_{1}}), \boldsymbol{\mathfrak{M}}(\boldsymbol{\mathcal{H}})\rangle\nonumber\\
&\stackrel{(c)}{\leq}&4\delta_{tr}^{\boldsymbol{\mathfrak{M}}}\|\boldsymbol{\mathfrak{M}}(\boldsymbol{\mathcal{H}}_{\Gamma\cup\Gamma_{1}})\|_{2}\|\boldsymbol{\mathfrak{M}}(\boldsymbol{\mathcal{H}})\|_{2}\nonumber\\
&\stackrel{(d)}{\leq}&4\delta_{tr}^{\boldsymbol{\mathfrak{M}}}\sqrt{1+\delta_{tr}^{\boldsymbol{\mathfrak{M}}}}\|\boldsymbol{\mathcal{H}}_{\Gamma\cup\Gamma_{1}}\|_{F}\|\boldsymbol{\mathfrak{M}}(\boldsymbol{\mathcal{H}})\|_{2},
\end{eqnarray}
where (a) is due to $\sum_{i=1}^{N}\gamma_{i}=1$, (b) is founded on the fact that $\boldsymbol{\mathcal{H}}_{\Gamma_{2}}=\sum\nolimits_{i=1}^{N}\gamma_{i}\boldsymbol{\mathcal{G}}_{i}$ and $\boldsymbol{\mathcal{H}}=\boldsymbol{\mathcal{H}}_{\Gamma}+\boldsymbol{\mathcal{H}}_{\Gamma_{1}}+\boldsymbol{\mathcal{H}}_{\Gamma_{2}}$, (c) holds because of the Cauchy-Schwartz inequality, and (d) follows from (\ref{t-RIP}), $|\Gamma_{1}|<(t-1)r$ and the monotonicity of t-RIC.

Next, we use the block diagonal matrix to estimate the lower bound of $\xi$.
Let $\bar{\phi}\triangleq\|\bdiag(\boldsymbol{\bar{\mathcal{H}}}_{\Gamma^{c}})\|_{*}/(t-1)r$. Repeat step 1 for the matrix $\bdiag(\boldsymbol{\bar{\mathcal{H}}})$ as we did for tensor $\boldsymbol{\mathcal{H}}$ and we have
\begin{equation*}
 \sigma(\bdiag(\boldsymbol{\bar{\mathcal{H}}}_{\Gamma_{2}}))\in T(\bar{\phi}, (t-1)r-|E_{1}|),~~\boldsymbol{\bar{g}}_{i}\in U(\bar{\phi}, (t-1)r-|E_{1}|, \sigma(\bdiag(\boldsymbol{\bar{\mathcal{H}}}_{\Gamma_{2}}))),
\end{equation*}
here, $E_{1}$ is an index set as the counterpart of $\Gamma_{1}$. By further defining
\begin{eqnarray*}
\boldsymbol{\bar{b}}_{i}&=&(1+\delta_{tr}^{\boldsymbol{\mathfrak{M}}})\sigma(\bdiag(\boldsymbol{\bar{\mathcal{H}}}_{\Gamma\cup\Gamma_{1}}))+\delta_{tr}^{\boldsymbol{\mathfrak{M}}}\boldsymbol{\bar{g}}_{i},\\
\boldsymbol{\bar{p}}_{i}&=&(1-\delta_{tr}^{\boldsymbol{\mathfrak{M}}})\sigma(\bdiag(\boldsymbol{\bar{\mathcal{H}}}_{\Gamma\cup\Gamma_{1}}))-\delta_{tr}^{\boldsymbol{\mathfrak{M}}}\boldsymbol{\bar{g}}_{i},\\
\boldsymbol{\bar{G}}_{i}&=&\sum\nolimits_{j=1}^{\kappa}(\boldsymbol{u}_{\boldsymbol{\bar{H}}})_{j}\cdot(\boldsymbol{\bar{g}}_{i})_{j}\cdot(\boldsymbol{v}_{\boldsymbol{\bar{H}}})_{j}^{*},\\
\boldsymbol{\bar{B}}_{i}&=&\sum\nolimits_{j=1}^{\kappa}(\boldsymbol{u}_{\boldsymbol{\bar{H}}})_{j}\cdot(\boldsymbol{\bar{b}}_{i})_{j}\cdot(\boldsymbol{v}_{\boldsymbol{\bar{H}}})_{j}^{*},\\
\boldsymbol{\bar{P}}_{i}&=&\sum\nolimits_{j=1}^{\kappa}(\boldsymbol{u}_{\boldsymbol{\bar{H}}})_{j}\cdot(\boldsymbol{\bar{p}}_{i})_{j}\cdot(\boldsymbol{v}_{\boldsymbol{\bar{H}}})_{j}^{*}.
\end{eqnarray*}
Then we can easily induce that
\begin{eqnarray*}
\bdiag(\boldsymbol{\bar{\mathcal{H}}}_{\Gamma_{2}})&=&\sum\nolimits_{i=1}^{N}\gamma_{i}\boldsymbol{\bar{G}}_{i},\\
\boldsymbol{\bar{B}}_{i}&=&(1+\delta_{tr}^{\boldsymbol{\mathfrak{M}}})\bdiag(\boldsymbol{\bar{\mathcal{H}}}_{\Gamma\cup\Gamma_{1}})+\delta_{tr}^{\boldsymbol{\mathfrak{M}}}\boldsymbol{\bar{G}}_{i},\\
\boldsymbol{\bar{P}}_{i}&=&(1-\delta_{tr}^{\boldsymbol{\mathfrak{M}}})\bdiag(\boldsymbol{\bar{\mathcal{H}}}_{\Gamma\cup\Gamma_{1}})-\delta_{tr}^{\boldsymbol{\mathfrak{M}}}\boldsymbol{\bar{G}}_{i}.
\end{eqnarray*}
Thus, on the other hand, we also have
\begin{eqnarray}\label{Lemma2-proof-inequality4}
\xi&\stackrel{(a)}{\geq}&\sum\nolimits_{i=1}^{N}\gamma_{i}\left((1-\delta_{tr}^{\boldsymbol{\mathfrak{M}}})\|\boldsymbol{\mathcal{B}}_{i}\|_{F}^{2}-(1+\delta_{tr}^{\boldsymbol{\mathfrak{M}}})\|\boldsymbol{\mathcal{P}}_{i}\|_{F}^{2}\right)\nonumber\\
&\stackrel{(b)}{=}&\frac{1}{n_{3}}\sum\nolimits_{i=1}^{N}\gamma_{i}\left((1-\delta_{tr}^{\boldsymbol{\mathfrak{M}}})\|\boldsymbol{\bar{B}}_{i}\|_{2}^{2}-(1+\delta_{tr}^{\boldsymbol{\mathfrak{M}}})\|\boldsymbol{\bar{P}}_{i}\|_{2}^{2}\right)\nonumber\\
&\stackrel{(c)}{=}&\frac{2}{n_{3}}\delta_{tr}^{\boldsymbol{\mathfrak{M}}}(1-(\delta_{tr}^{\boldsymbol{\mathfrak{M}}})^2)\|\sigma(\bdiag(\boldsymbol{\bar{\mathcal{H}}}_{\Gamma\cup\Gamma_{1}}))\|_{2}^{2}-\frac{2}{n_{3}}(\delta_{tr}^{\boldsymbol{\mathfrak{M}}})^{3}\sum\nolimits_{i=1}^{N}\gamma_{i}\|\boldsymbol{\bar{g}}_{i}\|_{2}^{2}\nonumber\\
&\stackrel{(d)}{\geq}&\frac{2}{n_{3}}\delta_{tr}^{\boldsymbol{\mathfrak{M}}}(1-(\delta_{tr}^{\boldsymbol{\mathfrak{M}}})^2)\|\bdiag(\boldsymbol{\bar{\mathcal{H}}}_{\Gamma\cup\Gamma_{1}})\|_{F}^{2}-\frac{2(\delta_{tr}^{\boldsymbol{\mathfrak{M}}})^{3}}{n_{3}(t-1)r}\|\bdiag(\boldsymbol{\bar{\mathcal{H}}}_{\Gamma^{c}})\|_{*}^{2}\nonumber\\
&\stackrel{(e)}{=}&2\delta_{tr}^{\boldsymbol{\mathfrak{M}}}(1-(\delta_{tr}^{\boldsymbol{\mathfrak{M}}})^2)\|\boldsymbol{\mathcal{H}}_{\Gamma\cup\Gamma_{1}}\|_{F}^{2}-\frac{2n_{3}(\delta_{tr}^{\boldsymbol{\mathfrak{M}}})^{3}}{(t-1)r}\|\boldsymbol{\mathcal{H}}_{\Gamma^{c}}\|_{*}^{2},
\end{eqnarray}
where (a) follows from t-RIP, (b) holds because of (\ref{Property 1}), (c) is due to  $\langle\sigma(\bdiag(\boldsymbol{\bar{\mathcal{H}}}_{\Gamma\cup\Gamma_{1}})), \boldsymbol{\bar{g}}_{i}\rangle=0$ for all $i=1,2,\cdots, N$, (d) is based on the fact that $\|\boldsymbol{X}\|_{F}=\|\sigma(\boldsymbol{X})\|_{2}$ for any matrix $\boldsymbol{X}$ and
\begin{equation*}
\|\boldsymbol{\bar{g}}_{i}\|_{2}^{2}\leq\|\boldsymbol{\bar{g}}_{i}\|_{0}(\|\boldsymbol{\bar{g}}_{i}\|_{\infty})^{2}\leq((t-1)r-|E_{1}|)\bar{\phi}^{2}\leq\frac{\|\bdiag(\boldsymbol{\bar{\mathcal{H}}}_{\Gamma^{c}})\|_{*}^{2}}{(t-1)r},
\end{equation*}
and (e) follows from (\ref{Property 2}).

Combining (\ref{Lemma2-proof-inequality3}) and (\ref{Lemma2-proof-inequality4}), we get
\begin{equation}\label{Lemma2-proof-inequality5}
 (1-(\delta_{tr}^{\boldsymbol{\mathfrak{M}}})^2)\|\boldsymbol{\mathcal{H}}_{\Gamma\cup\Gamma_{1}}\|_{F}^{2}-\frac{n_{3}(\delta_{tr}^{\boldsymbol{\mathfrak{M}}})^{2}}{(t-1)r}\|\boldsymbol{\mathcal{H}}_{\Gamma^{c}}\|_{*}^{2}
\leq2\sqrt{1+\delta_{tr}^{\boldsymbol{\mathfrak{M}}}}\|\boldsymbol{\mathcal{H}}_{\Gamma\cup\Gamma_{1}}\|_{F}\|\boldsymbol{\mathfrak{M}}(\boldsymbol{\mathcal{H}})\|_{2}.
\end{equation}
Obviously, (\ref{Lemma2-proof-inequality5}) is a quadratic inequality in terms of $\|\boldsymbol{\mathcal{H}}_{\Gamma\cup\Gamma_{1}}\|_{F}$. Using extract roots formula, we obtain
\begin{eqnarray*}
&&\|\boldsymbol{\mathcal{H}}_{\Gamma\cup\Gamma_{1}}\|_{F}\\
 &\leq&\frac{2\sqrt{1+\delta_{tr}^{\boldsymbol{\mathfrak{M}}}}\|\boldsymbol{\mathfrak{M}}(\boldsymbol{\mathcal{H}})\|_{2}+\sqrt{(2\sqrt{1+\delta_{tr}^{\boldsymbol{\mathfrak{M}}}}\|\boldsymbol{\mathfrak{M}}(\boldsymbol{\mathcal{H}})\|_{2})^{2}+4(1-
(\delta_{tr}^{\boldsymbol{\mathfrak{M}}})^{2})\frac{n_{3}(\delta_{tr}^{\boldsymbol{\mathfrak{M}}})^{2}}{(t-1)r}\|\boldsymbol{\mathcal{H}}_{\Gamma^{c}}\|_{*}^{2}}}{2(1-(\delta_{tr}^{\boldsymbol{\mathfrak{M}}})^2)}\\
&\leq&\frac{2}{(1-\delta_{tr}^{\boldsymbol{\mathfrak{M}}})\sqrt{1+\delta_{tr}^{\boldsymbol{\mathfrak{M}}}}}\|\boldsymbol{\mathfrak{M}}(\boldsymbol{\mathcal{H}})\|_{2}+\frac{\sqrt{n_{3}}\delta_{tr}^{\boldsymbol{\mathfrak{M}}}}{\sqrt{(1-(\delta_{tr}^{\boldsymbol{\mathfrak{M}}})^{2})(t-1)}}
\frac{\|\boldsymbol{\mathcal{H}}_{\Gamma^{c}}\|_{*}}{\sqrt{r}},
\end{eqnarray*}
where the last inequality is based on the fact that $\sqrt{x^{2}+y^{2}}\leq|x|+|y|$. Therefore we prove (\ref{Lemma2-proof-inequality1}). Since we also have
$\|\boldsymbol{\mathcal{H}}\|_{F}\leq\|\boldsymbol{\mathcal{H}}_{\Gamma\cup\Gamma_{1}\|_{F}}$, it is easy to induce (\ref{Lemma2-results}).

When $tr$ is not an integer, let $t^{\prime}=\lceil tr\rceil/r$, then $t^{\prime}r$ is an integer. Based on the definition of t-RIP and $tr < \lceil tr\rceil < tr + 1$, we have $\delta_{t^{\prime}r}=\delta_{tr}<1$. Thus (\ref{Lemma2-results}) holds no matter whether $tr$ is an integer or not. This completes the proof.
\end{proof}

\end{document}